\pdfoutput=1
\RequirePackage{ifpdf}
\ifpdf 
\documentclass[pdftex]{sigma}
\else
\documentclass{sigma}
\fi

\usepackage{mathtools}

\usepackage{mathrsfs}

\numberwithin{equation}{section}

\newtheorem{Th}{Theorem}[section]
\newtheorem*{Theorem*}{Theorem}
\newtheorem{Cor}[Th]{Corollary}
\newtheorem{Lem}[Th]{Lemma}

 { \theoremstyle{definition}
\newtheorem{Def}[Th]{Definition}

\newtheorem{Ex}[Th]{Example}
 }

\newcommand{\diff}[2]{\frac{\partial #1}{\partial #2}}

\newcommand{\qa}{\alpha}
\newcommand{\qb}{\beta}
\newcommand{\qd}{\delta}
\newcommand{\qg}{\gamma}

\newcommand{\qt}{\tau}

\newcommand{\qe}{\varepsilon}

\newcommand{\qp}{\partial}

\newcommand{\Ql}{\Lambda}

\newcommand{\Qo}{\Omega}
\newcommand{\mgn}{\overline{\mathcal M}_{g,n}}
\newcommand{\res}{\operatorname{res}}
\newcommand{\vard}[2]{\frac{\delta #1}{\delta #2}}

\newcommand{\kk}[1]{\left(#1\right)}
\newcommand{\fk}[2]{\left[#1, #2\right]}
\newcommand{\rr}{\mathcal R}
\newcommand{\ol}{\overline L}
\begin{document}

\allowdisplaybreaks

\newcommand{\arXivNumber}{2110.03317}

\renewcommand{\PaperNumber}{037}

\FirstPageHeading

\ShortArticleName{Reduction of the 2D Toda Hierarchy and Linear Hodge Integrals}

\ArticleName{Reduction of the 2D Toda Hierarchy\\ and Linear Hodge Integrals}

\Author{Si-Qi LIU, Zhe WANG and Youjin ZHANG}

\AuthorNameForHeading{S.-Q.~Liu, Z.~Wang and Y.~Zhang}
\Address{Department of Mathematical Sciences, Tsinghua University, Beijing 100084, P.R.~China}
\Email{\href{mailto:liusq@tsinghua.edu.cn}{liusq@tsinghua.edu.cn}, \href{mailto:zhe-wang17@mails.tsinghua.edu.cn}{zhe-wang17@mails.tsinghua.edu.cn}, \href{mailto:youjin@tsinghua.edu.cn}{youjin@tsinghua.edu.cn}}

\ArticleDates{Received October 28, 2021, in final form May 15, 2022; Published online May 18, 2022}

\Abstract{We construct a certain reduction of the 2D Toda hierarchy and obtain a tau-symmetric Hamiltonian integrable hierarchy. This reduced integrable hierarchy controls the linear Hodge integrals in the way that one part of its flows yields the intermediate long wave hierarchy, and the remaining flows coincide with a certain limit of the flows of the fractional Volterra hierarchy which controls the special cubic Hodge integrals. }

\Keywords{integrable hierarchy; limit fractional Volterra hierarchy; intermediate long wave hierarchy}

\Classification{53D45; 37K10; 37K25}

\section{Introduction}
Let $\mgn$ be the moduli space of stable curves of genus $g$ with $n$ marked points, and $\mathbb L_i$ be the $i$-th tautological line bundle of $\mgn$ whose first Chern class is denoted by $\psi_i$ for $i=1,\dots,n$. Let $\mathbb E_{g,n}$ be the Hodge bundle of $\mgn$ and $\qg_j\in H^{j}(\mgn)$ be the degree $j$ component of its Chern character. In \cite{dubrovin2016hodge}, the following generating function of Hodge integrals is studied:
\begin{equation}\label{Hodge}
\mathcal H(\textbf{t};\textbf{s};\qe)=\sum_{\substack{g,n,m\geq 0\\k_1,\dots,k_n\geq 0 \\ l_1,\dots,l_m\geq 1}}\qe^{2g-2}\frac{t_{k_1}\cdots t_{k_n}}{n!}\frac{s_{l_1}\cdots s_{l_m}}{m!}\int_{\mgn}\psi_1^{k_1}\cdots \psi_n^{k_n}\qg_{2l_1-1}\cdots\qg_{2l_m-1}.
\end{equation}
Note that only odd components of the Chern character are considered due to the vanishing of even components $\qg_{2j}$ by Mumford's relation \cite{mumford1983towards}. It is proved that the evolutions of the two-point function \[w=\qe^2\frac{\qp^2}{\qp t_0^2}\mathcal H(\textbf{t};\textbf{s};\qe)\]
along the time variables $t_n$ form a tau-symmetric Hamiltonian integrable hierarchy which is called the Hodge hierarchy.

When the parameters $s_k$ are taken to be equal to some special values, the Hodge hierarchy degenerates to some well-known integrable hierarchies. For example, by taking $s_k=0$, we recover the Korteweg--de Vries (KdV) hierarchy. When $s_k=(2k-2)!s^{2k-1}$, \eqref{Hodge} reduces to a generating function of linear Hodge integrals and the corresponding integrable hierarchy is proved to be the intermediate long wave (ILW) hierarchy in \cite{buryak2015dubrovin}. For arbitrary given non-zero numbers $p$, $q$, $r$ satisfying the local Calabi--Yau condition\[\frac1p+\frac1q+\frac1r = 0,\] we obtain the generating function of the special cubic Hodge integrals
\begin{equation}\label{cubicHodge}
\mathcal H(\textbf{t};p,q,r;\qe)=\sum_{g\geq 0}\qe^{2g-2}\sum_{n\geq 0}\frac{t_{k_1}\cdots t_{k_n}}{n!}\int_{\mgn}\psi_1^{k_1}\cdots \psi_n^{k_n}\mathcal C_g(-p)\mathcal C_g(-q)\mathcal C_g(-r)
\end{equation}
from \eqref{Hodge} by setting
\[
s_j=-(2j-2)!\big(p^{2j-1}+q^{2j-1}+r^{2j-1}\big),
\]
where $\mathcal C_g(z)$ is the Chern polynomial of the Hodge bundle $\mathbb E_{g,n}$. In \cite{liu2021hodge}, it is proved that the corresponding integrable hierarchy is the fractional Volterra (FV) hierarchy which is constructed in \cite{liu2018fractional}. This fact is called the Hodge-FVH correspondence.

For a fixed parameter $p\neq 0$, we see from the local Calabi--Yau condition $pq+qr+rp=0$ that when $q$ tends to zero so does $r = -pq/(p+q)$. Hence by taking such a limit, the generating function \eqref{cubicHodge} becomes a generating function of linear Hodge integrals.
A natural question is whether we can also take the limit of the FV hierarchy to obtain the ILW hierarchy? The answer to this question is not straightforward due to the fact that the construction of the FV hierarchy involves a complicated infinite linear combination of the time variables \cite{liu2021hodge}. It turns out that one family of the flows of the FV hierarchy does not admit a limit when $q,r\to 0$ (see Section~\ref{sec3} for details) and the other family does have a limit, and this limit can be viewed as infinite linear combinations of the flows of the ILW hierarchy.

To illustrate the limit procedure more explicitly, let us consider the following equation which is one of the flows of the FV hierarchy:
\[
\diff{u}{t} = \frac{\big(\Ql^{-1/r}-1\big)\big(1-\Ql^{-1/q}\big)}{\qe\big(\Ql^{1/p}-1\big)}{\rm e}^u,
\]
where $\Ql = \exp(\qe\qp_x)$ is the shift operator. After the rescaling $\qe\mapsto q\qe$, $t\mapsto q^2t$, we can take the limit $q\to 0$ of the above equation, and obtain the following equation:
\begin{equation}\label{EA}
\diff{u}{t}=p\frac{\big(1-\Ql^{-1}\big)(\Ql-1)}{\qe^2\qp_x}{\rm e}^u = p{\rm e}^uu_x+p\frac{\qe^2}{12}{\rm e}^u\big(u_{xxx}+3u_xu_{xx}+u_x^3\big)+O\big(\qe^4\big).
\end{equation}
On the other hand, from the Lax equations of the ILW hierarchy \cite{buryak2018simple} it follows that the first nontrivial flow of the ILW hierarchy reads
\[
\diff{w}{s_1} = ww_x+\qe\frac{\qt\qp_x^2}{2}\frac{\Ql+1}{\Ql-1}w-\qt w_x,
\]
and all other flows have the form
\begin{gather*}
\diff{w}{s_n}= \frac{w^n}{n!}w_x+\qe^2\qt\big(a_nw^{n-1}w_{xxx}+b_nw^{n-2}w_{xx}w_x+c_nw^{n-3}w_{x}^3\big)+O\big(\qe^4\big),
\\
a_n = \frac{(n-1)!}{12},\qquad
b_n = \frac{(n-2)!}{6},\qquad
c_n=\frac{(n-3)!}{24},\qquad
n\geq 1,
\end{gather*}
here $\qt$ is a parameter of the ILW hierarchy and we assume $n! = 0$ for $n<0$. Let us define a~flow by an infinite linear combination of the flows of the ILW hierarchy:
\[
\diff{}{s}:=\sum_{n\geq 1}\diff{}{s_n},
\]
then we obtain the following expression for this flow
\[
\diff{w}{s} = {\rm e}^ww_x+\frac{\qe^2}{24}{\rm e}^w\big(2w_{xxx}+4w_xw_{xx}+w_x^3\big)+O\big(\qe^4\big),
\]
here we take $\qt=1$ for simplicity.
It is straightforward to verify that after a Miura type transformation
\[
u = w+\frac{\qe^2}{24}w_{xx}+O\big(\qe^4\big),
\]
we arrive at
\[
p\diff{u}{s} = p{\rm e}^uu_x+p\frac{\qe^2}{12}{\rm e}^u\big(u_{xxx}+3u_xu_{xx}+u_x^3\big)+O\big(\qe^4\big),
\]
which coincides with the flow \eqref{EA}. Therefore we see that the limit of the FV hierarchy is related to the ILW hierarchy by infinite linear combinations of flows.

Since both the ILW hierarchy and the limit of the FV hierarchy correspond to the linear Hodge integrals, we expect that these two integrable hierarchies are compatible with each other and that there exists an integrable hierarchy that contains these two hierarchies. It turns out that such an integrable hierarchy indeed exists and it is given by a certain reduction of the 2D Toda hierarchy.

The 2D Toda hierarchy \cite{toda1967vibration,ueno1984toda} is one of the central objects of study in the theory of integrable systems and its various reductions play important roles in the field of mathematical physics. For example, the 1D Toda hierarchy (and its extension), the equivariant Toda hierarchy and the Ablowitz--Ladik hierarchy control the Gromov--Witten theory of~$\mathbb P^1$~\cite{carlet2004extended,getzler2001toda}, the equivariant Gromov--Witten theory of $\mathbb P^1$ \cite{okounkov2006equivariant} and the Gromov--Witten theory of local $\mathbb P^1$ \cite{brini2012local} respectively. For more applications of the 2D Toda hierarchy, one may refer to \cite{takasaki2018toda} and the references therein. Recall that the 2D Toda hierarchy can be described in terms of the operators \cite{ueno1984toda}
\[
L = \Ql+ \sum_{n\geq 0}u_n \Ql^{-n},\qquad
\ol = \bar u_{-1}\Ql^{-1}+\sum_{n\geq 0}\bar u_n\Ql^n
\]
by the following Lax equations:
\begin{gather*}
\diff{L}{t_{1,n}} = \big[(L^n)_+,L\big],\qquad \diff{L}{t_{2,n}} = -\big[\big(\ol^n\big)_-,L\big],
\\
\diff{\ol}{t_{1,n}} = \big[(L^n)_+,\ol\,\big],\qquad \diff{\ol}{t_{2,n}} = -\big[\big(\ol^n\big)_-,\ol\,\big].
\end{gather*}
Let us consider the following reduction of the 2D Toda hierarchy
\[
\log L = p\ol-\log\ol = p K,
\]
where the precise definition of the logarithm of $L$ and $\ol$ will be given in Section~\ref{sec2},
\begin{equation}\label{BN}
K=\frac1{p}\qe\qp_x+{\rm e}^u\Ql^{-1},
\end{equation}
and $p$ is a parameter. Then we define the reduction of the flows by
\[
\qe\diff{K}{t_{1,n}} = [(L^n)_+,K],\qquad
\qe\diff{K}{t_{2,n}} = -\big[\big(\ol^n\big)_-,K\big],\qquad
n\geq 1.
\]
This is a well-defined reduction of the 2D Toda hierarchy, and due to the reason that the construction of this reduction is inspired by taking the limit of the FV hierarchy, we call this integrable hierarchy the limit fractional Volterra (LFV) hierarcy. We summarize its main properties in the following theorem.
\begin{Th}
The LFV hierarchy is a tau-symmetric Hamiltonian integrable hierarchy with hydrodynamic limit. Moreover, the flows $\diff{}{t_{1,n}}$ of the LFV hierarchy are certain limits of the flows of the FV hierarchy and the flows $\diff{}{t_{2,n}}$ is equivalent to the ILW hierarchy under a certain Miura-type transformation.
\end{Th}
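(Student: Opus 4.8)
The statement collects four claims, and the plan is to treat them in this order: well-definedness and integrability of the reduction, then the Hamiltonian and tau-symmetric structure together with the hydrodynamic limit, and finally the two identifications with the FV and ILW hierarchies.

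The starting observation is that the defining constraint makes $K$ simultaneously a function of $L$ and of $\ol$: from $\log L=pK$ we get $K=\tfrac1p\log L$, while from $p\ol-\log\ol=pK$ we get $K=\ol-\tfrac1p\log\ol$. Hence $[L^n,K]=0$ and $\big[\ol^n,K\big]=0$ for all $n$, so the reduced flows can be rewritten as
\[
\qe\diff{K}{t_{1,n}}=[(L^n)_+,K]=-[(L^n)_-,K],\qquad \qe\diff{K}{t_{2,n}}=-\big[\big(\ol^n\big)_-,K\big]=\big[\big(\ol^n\big)_+,K\big].
\]
Now I would run a degree count in powers of $\Ql$. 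Since $K=\tfrac1p\qe\qp_x+{\rm e}^u\Ql^{-1}$, the operator $[(L^n)_+,K]$ has shift-degrees at least $-1$, while $-[(L^n)_-,K]$ has shift-degrees at most $-1$; their equality forces the flow to be a single term of degree $-1$, namely $\qe\diff{K}{t_{1,n}}=\big(\qe\,\qp_{t_{1,n}}{\rm e}^u\big)\Ql^{-1}$, so that only the field $u$ evolves and $\tfrac1p\qe\qp_x$ stays fixed. The same count, now using that $\big[\big(\ol^n\big)_+,K\big]$ has degrees at least $-1$ and $-\big[\big(\ol^n\big)_-,K\big]$ has degrees at most $-1$, handles the $t_{2,n}$ flows. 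This tangency computation is exactly what shows the reduction is well defined, and I expect it to be the cleanest part of the argument. Integrability then comes for free: the reduced flows are restrictions of the pairwise commuting 2D Toda flows to the invariant submanifold just identified, so they commute among themselves.

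For the Hamiltonian and tau-symmetric structure I would reduce the known structure of the 2D Toda hierarchy. Concretely, I would produce Hamiltonian densities $h_{a,n}$ from residues of suitable functions of $K$, of $L$ and of $\ol$, verify the tau-symmetry relations $\diff{h_{a,m}}{t_{b,n}}=\diff{h_{b,n}}{t_{a,m}}$ directly from the Lax equations above, and exhibit a Poisson bracket on the field $u$ for which the flows are Hamiltonian; the tau-symmetry then guarantees a tau function. The hydrodynamic limit is checked by passing to the dispersionless limit $\qe\to0$ of $K$ and confirming that the leading-order system is a nondegenerate evolutionary hierarchy of hydrodynamic type, so that the standard Dubrovin--Zhang framework applies.

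The two identifications are where the real work lies, and I expect the FV limit to be the main obstacle. For $\diff{}{t_{1,n}}$ I would carry out the rescaling $\qe\mapsto q\qe$, $t\mapsto q^2t$ and the limit $q,r\to0$ under the Calabi--Yau relation $r=-pq/(p+q)$, exactly as in the model computation leading to \eqref{EA}, but now for the whole family; the delicate point, already signalled by the structure of the FV hierarchy, is that only one of its two families of flows survives the limit and that the survivors must be matched with the operators $[(L^n)_+,K]$, which requires controlling the infinite linear combinations of time variables entering the FV construction. For $\diff{}{t_{2,n}}$ I would build the Miura-type transformation $u=w+\tfrac{\qe^2}{24}w_{xx}+O\big(\qe^4\big)$ to all orders and show that it conjugates the reduced $t_{2,n}$ flows onto the ILW flows; here the operator $\tfrac{\Ql+1}{\Ql-1}$ characteristic of the ILW Lax formulation should emerge by solving $p\ol-\log\ol=pK$ for $\ol$ in terms of $K$ and expanding. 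Matching the entire hierarchy rather than only the first flow computed in the introduction is the genuinely delicate step, and I would carry it out by comparing the generating series of the two sets of Hamiltonians.
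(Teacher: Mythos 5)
Your degree-count argument for well-definedness (the flow $[(L^n)_+,K]=-[(L^n)_-,K]$ is concentrated in shift-degree $-1$) is exactly the paper's argument, and your treatment of the $t_{1,n}$ flows as limits of rescaled FV flows is in the paper's spirit, though note that the matching there is flow-by-flow: one sets $\tilde f_k=q^{k+1}f_k$, checks that the FV recursion relations degenerate to the recursion \eqref{BT} for the coefficients of $L$, and concludes $\diff{u}{t_{1,n}}=\lim_{q\to0}\big(q^{n+1}\diff{v}{T_{1,n}}\big|_{\qe\mapsto q\qe}\big)$; no infinite linear combination of time variables enters at this stage (those occur only when comparing with ILW flows). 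However, your proposal has three genuine gaps. First, everything you do presupposes that there exist operators $L$ and $\ol$ with coefficients in $\rr(u)$ solving the constraints $\log L=pK$ and $\ol-\tfrac1p\log\ol=K$; this is not automatic and is the content of Lemmas \ref{BI} and \ref{BM}, proved by converting the constraints into the residue equations $\res\big[L^n,p{\rm e}^u\Ql^{-1}\big]=\qe\qp_x\res L^n$ and solving recursively for $a_k$, $b_k$ with normalized integration constants. Second, for the tau function it is not enough to ``verify tau-symmetry from the Lax equations'': the Lax equations only yield $(\Ql-1)\Qo_{i,k;j,l}=(\Ql-1)\Qo_{j,l;i,k}$, so the two-point functions agree up to constants, and killing these constants requires an explicit evaluation; the paper sets $u=u^{(i)}=0$ and proves symmetry of the resulting hypergeometric sum $\sum_{n=1}^l(k/l)^n\frac{n}{(l-n)!(k+n)!}=\frac{1}{(k+l)(k-1)!(l-1)!}$ via Gosper's algorithm. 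Your plan contains no mechanism for this step.

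The most serious gap is the ILW identification, which you yourself flag as delicate but for which your strategy would not close. Building the Miura transformation order by order in $\qe$ and ``comparing generating series of Hamiltonians'' gives no way to terminate the construction, and moreover the transformation $u=w+\tfrac{\qe^2}{24}w_{xx}+O\big(\qe^4\big)$ you start from is the one used in the introduction to relate the $t_{1,1}$ flow to an \emph{infinite linear combination} of ILW flows, which is a different statement from the claimed equivalence of the individual $t_{2,n}$ flows with the individual ILW flows. The paper instead exhibits the closed-form transformation $u=p\frac{\Ql^{1/2}-\Ql^{-1/2}}{\qe\qp_x}w$ and reduces the whole identification to the single residue identity $\Ql^{1/2}\res\mathcal L^k=\res\ol^k$, proved by conjugating $\ol$ by the residue $q_0$ of its dressing operator, passing to the adjoint operator, and matching the resulting recursion relations \eqref{temp13} with the ILW recursion relations \eqref{BV}; the flow correspondence of Corollary \ref{temp16} is then immediate. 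Without this (or an equivalent) exact mechanism, your matching of ``the entire hierarchy rather than only the first flow'' remains a statement of intent rather than a proof.
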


The paper is organized as follows. In Section~\ref{sec2}, we give the definition of the LFV hierarchy and prove that it is a tau-symmetric Hamiltonian integrable hierarchy. In Section~\ref{sec3}, we explain how the construction of the LFV hierarchy is inspired by taking a certain limit of the FV hierarchy, and we also obtain a limit of the Hodge-FVH correspondence. In Section~\ref{sec4}, we relate the LFV hierarchy to the ILW hierarchy. Finally in Section~\ref{sec5}, we give some concluding remarks about the relation between our work and the Gromov--Witten/Hurwitz theory.

\section{The LFV hierarchy and its properties}\label{sec2}
Throughout this paper, we work with the ring of differential polynomials $\mathcal R(u)$. It consists of formal power series in $\qe$ with coefficients being elements in the polynomial ring $C^\infty(u)\otimes\mathbb C\big[u^{(k)}\colon k\geq 1\big]$. Let us define a derivation $\qp_x$ and an automorphism $\Ql$ on $\rr(u)$ by
\[
\qp_x = \sum_{k\geq 0}u^{(k+1)}\diff{}{u^{(k)}},\qquad \Ql = \exp(\qe\qp_x).
\]
If we view $u = u(x)$ as a function of the spatial variable $x$, then it is easy to see that $u^{(k)} = \qp_x^ku(x)$ and $\Ql u(x) = u(x+\qe)$. For this reason, the operator $\Ql$ is called the shift operator. Note that the ring $\rr(u)$ is graded with respect to the differential degree $\deg_x$ given by $\deg_x u^{(k)} = k$.

Let us consider the following two difference operators given by
\begin{gather}
\label{BG}
L=\Ql+a_0+a_1\Ql^{-1}+\cdots,
\\
\label{BH}
\overline L={\rm e}^u\Ql^{-1}+b_0+b_1\Ql+\cdots,
\end{gather}
where $a_i,b_i\in\rr(u)$ for $i\geq 0$. In \cite{ueno1984toda}, the dressing operators $P$, $Q$ for these difference operators are defined by
\begin{alignat}{3}
\label{BQ}
& L=P\Ql P^{-1}, \qquad&& P=1+\sum_{k\geq 1}p_k\Ql^{-k},&
\\
\label{BR}
& \overline L = Q\Ql^{-1} Q^{-1},\qquad && Q=\sum_{k\geq 0}q_k\Ql^k.&
\end{alignat}
The coefficients $p_k$ and $q_k$ of the dressing operators $P$ and $Q$ are not in the ring $\mathcal R(u)$ but only exist in a certain extension of $\mathcal R(u)$ (for details, see \cite{ueno1984toda}). Note that the choice of $P$ and~$Q$ is unique up to the right multiplication by difference operators with constant coefficients. Follo\-wing~\cite{carlet2004extended}, we define the logarithm of the operators $L$, $\overline{L}$ as follows:
\begin{gather*}
\log L:=P(\qe\qp_x)P^{-1}=\qe\qp_x-\qe P_xP^{-1},
\\
\log\overline L:=Q(-\qe\qp_x)Q^{-1}=-\qe\qp_x+\qe Q_xQ^{-1},
\end{gather*}
where $P_x = \sum_{k\geq 1}(\qp_xp_k)\Ql^{-k}$ and $Q_x = \sum_{k\geq 0}(\qp_xq_k)\Ql^k$. The ambiguities of choices of $P$ and~$Q$ are canceled in the operators $P_xP^{-1}$ and $Q_xQ^{-1}$, and they are difference operators with coefficients belonging to $\mathcal R(u)$ \cite{carlet2004extended}. Before giving the definition of the LFV hierarchy, let us make some preparations by proving the following lemmas.
\begin{Lem}
\label{BI}
There exist unique differential polynomials $a_k$ such that
\begin{equation}\label{zh-2022-2}
\lim_{\qe\to 0}a_k = \frac{p^{k+1}}{(k+1)!}{\rm e}^{(k+1)u}
\end{equation}
and the difference operator $L$ defined in \eqref{BG} satisfies the relation
\begin{equation}\label{zh-2022-1}
\frac{1}{p}\log L = K,
\end{equation}
here $p$ is a formal parameter and $K$ is the differential-difference operator defined by \eqref{BN}.
\end{Lem}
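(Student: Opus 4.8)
The plan is to read the defining relation $\frac1p\log L=K$ as $\log L=pK=\qe\qp_x+p{\rm e}^u\Ql^{-1}$ and to recognize $L$ as the operator exponential of $pK$. Recall that the logarithm is defined by $\log L:=P(\qe\qp_x)P^{-1}$ for the dressing $L=P\Ql P^{-1}$. For any operator $X=\qe\qp_x+\sum_{k\ge1}c_k\Ql^{-k}$ one can solve $P^{-1}XP=\qe\qp_x$ recursively for a dressing $P=1+\sum_{k\ge1}p_k\Ql^{-k}$ (the $p_k$ being obtained by integrating in $x$, hence living in the extension of $\rr(u)$ where $P$ resides); then $\exp(X):=P\Ql P^{-1}$ has the form \eqref{BG} and $\log(\exp(X))=X$, while conversely $\exp(\log L)=P\Ql P^{-1}=L$. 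Thus $\exp$ and $\log$ are mutually inverse between operators of the form \eqref{BG} and operators $\qe\qp_x+\text{(lower order)}$. Since $pK$ has vanishing $\Ql^0$-term and so lies in the second class, the relation forces $L=\exp(pK)$, and the task reduces to showing this exponential has the shape \eqref{BG} with differential-polynomial coefficients satisfying \eqref{zh-2022-2}.

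First I would establish existence by computing $\exp(pK)$ through the time-ordered (Duhamel) expansion
\[
{\rm e}^{\,\qe\qp_x+p{\rm e}^u\Ql^{-1}}=\sum_{n\ge0}p^n\int_{\Qd_n}\Ql^{t_0}{\rm e}^u\Ql^{-1}\Ql^{t_1}{\rm e}^u\Ql^{-1}\cdots{\rm e}^u\Ql^{-1}\Ql^{t_n}\,{\rm d}t,
\]
where $\Qd_n=\{t_i\ge0,\ \sum_{i=0}^n t_i=1\}$ and $\Ql^{s}=\exp(s\qe\qp_x)=\sum_{j\ge0}\frac{(s\qe)^j}{j!}\qp_x^j$ acts on $\rr(u)$ as a formal power series in $\qe$. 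Moving the shift operators to the right, the $n$-th summand collapses to a single coefficient times $\Ql^{1-n}$, so the coefficient of $\Ql^{-k}$ receives a contribution only from $n=k+1$; this yields the closed formula $a_k=p^{k+1}\int_{\Qd_{k+1}}\prod_{i=0}^{k}\bigl(\Ql^{\sigma_i}{\rm e}^u\bigr)\,{\rm d}t$ with $\sigma_i=\sum_{j=0}^i t_j-i$. Because each $\Ql^{\sigma_i}{\rm e}^u={\rm e}^{u(x+\sigma_i\qe)}$ lies in $\rr(u)$ and the $t$-integration is polynomial, every $a_k$ is a genuine differential polynomial, the leading term of $L$ is exactly $\Ql$, and hence $\log L=pK$, i.e.\ $\frac1p\log L=K$.

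The limit \eqref{zh-2022-2} then drops out of the same formula: as $\qe\to0$ each shift $\Ql^{\sigma_i}$ tends to the identity, so $\prod_{i=0}^k\Ql^{\sigma_i}{\rm e}^u\to{\rm e}^{(k+1)u}$ and $\lim_{\qe\to0}a_k=p^{k+1}{\rm e}^{(k+1)u}\operatorname{vol}(\Qd_{k+1})=\frac{p^{k+1}}{(k+1)!}{\rm e}^{(k+1)u}$, using $\operatorname{vol}(\Qd_{k+1})=1/(k+1)!$. As a cross-check, in the dispersionless limit the symbols commute and $\log\hat L(\qth)=\qth+p{\rm e}^u{\rm e}^{-\qth}$ gives $\hat L(\qth)={\rm e}^{\qth}\sum_{j\ge0}\frac{(p{\rm e}^u)^j}{j!}{\rm e}^{-j\qth}$, whose $\Ql^{-k}$-coefficient is the same. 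Uniqueness is immediate: if $L$ of the form \eqref{BG} satisfies $\frac1p\log L=K$, then applying $\exp$ forces $L=\exp(pK)$, the operator just constructed, so the $a_k$ are uniquely determined.

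The main obstacle is precisely the assertion that the $a_k$ belong to $\rr(u)$ rather than to the larger extension in which the dressing operators live. Solving $P^{-1}(pK)P=\qe\qp_x$ coefficient by coefficient requires integration in $x$ (already $-\qe\qp_x p_1=p{\rm e}^u$), so the $p_k$ genuinely leave $\rr(u)$; what rescues the construction is that these non-polynomial contributions cancel in $L=\exp(pK)$, as the Duhamel formula makes transparent, since only the single term $n=k+1$ contributes to each $\Ql^{-k}$ and only polynomial $t$-integrations survive. Verifying this cancellation---equivalently, that $\exp$ maps the $\rr(u)$-class into itself---is the one step deserving care; everything else is formal manipulation of the exponential.
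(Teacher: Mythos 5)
Your proposal is correct in substance but proceeds along a genuinely different route from the paper's proof. The paper never inverts the logarithm: it first constructs the $a_k$ recursively by imposing the residue identities \eqref{BJ}, $\res\big[L^n,p{\rm e}^u\Ql^{-1}\big]=\qe\qp_x\res L^n$, showing by induction on the structure of the coefficients of $L^n$ that each successive equation determines one new coefficient (with integration constants set to zero, starting from \eqref{BL}); it then invokes the results of \cite{carlet2004extended} to convert these residue identities into $\qe P_xP^{-1}=-p{\rm e}^u\Ql^{-1}$, which is exactly \eqref{zh-2022-1}; finally the limit \eqref{zh-2022-2} is verified from the recursion \eqref{BT} implied by $[L,pK]=0$. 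You instead solve \eqref{zh-2022-1} in one stroke as $L=\exp(pK)$ and evaluate the exponential by the time-ordered expansion, arriving at a closed simplex-integral formula for every $a_k$; this yields the membership $a_k\in\rr(u)$, the limit \eqref{zh-2022-2}, and uniqueness (in fact the stronger statement that \eqref{zh-2022-1} alone pins down $L$, with no limit condition needed) all at once, and your formula correctly reproduces \eqref{BL}, since $\int_0^1\Ql^s\,{\rm d}s=\frac{\Ql-1}{\qe\qp_x}$. The trade-off: the paper's argument stays entirely inside the algebra of difference operators with coefficients in $\rr(u)$ and leans on a known result, while yours is more explicit and self-contained but routes through objects outside that algebra (non-integer shifts $\Ql^{t}$, infinite-order differential operators), and the step you wave off as ``formal manipulation''---the identification of the dressing-defined exponential $P\Ql P^{-1}$ with the naive series $\sum_{n\geq 0}(pK)^n/n!$ and hence with the Duhamel series---is precisely where care is required: one must check the $\qe$-adic, coefficient-wise convergence of these series in the differential-difference algebra over the extension ring in which $P$ lives, the continuity of conjugation by $P$ with respect to that topology, and the term-by-term validity of the Duhamel rearrangement there. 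These checks are routine (for fixed $\Ql^{-m}$ and fixed power of $\qe$ only finitely many terms contribute), so the gap is fillable rather than fatal, but they should be recorded explicitly, since the entire content of the lemma is a statement about which ring the $a_k$ belong to.
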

\begin{proof}
Let us first find $a_i\in\rr(u)$ such that the operator $L$ defined in \eqref{BG} satisfies the identities
\begin{gather}
\label{BJ}
\res \big[L^n,p{\rm e}^u\Ql^{-1}\big] = \qe\qp_x\res L^n,\qquad n\geq 1,
\end{gather}
here and henceforth, for any difference operator $D = \sum_k f_k\Ql^k$ with $f_k\in\rr(u)$, we define its residue by $\res D = f_0$. By taking $n=1$ in the equation \eqref{BJ}, we arrive at
\[
p(\Ql-1){\rm e}^u = \qe\qp_xa_0.
\]
Then the above equation for $a_0$ has a unique solution by taking the integral constant to be zero, i.e.,
\begin{equation}
\label{BL}
a_0 = p\frac{\Ql-1}{\qe\qp_x}{\rm e}^u.
\end{equation}

For general $n\geq 1$, one can show by induction that if we represent $L^n = \sum_kf^n_k\Ql^{k}$, then the differential polynomials $f^n_k$ can be viewed as functions in $a_i$ and we have
\begin{gather*}
f^n_k = f^n_k(a_0,\dots,a_{n-1-k}),\qquad k\leq n-1,
\\
f^n_0 = \big(1+\Ql+\cdots+\Ql^{n-1}\big)a_{n-1}+g_n(a_0,\dots,a_{n-2}).
\end{gather*}
Therefore we see that the differential polynomials can be found recursively by taking $n=2,3,\dots$ in the equation \eqref{BJ}. More explicitly, if we have found differential polynomials $a_0,\dots,a_{n-1}$, to determine $a_{n}$, we consider the equation
\[
\res \big[L^{n+1},p{\rm e}^u\Ql^{-1}\big] = \qe\qp_x\res L^{n+1}
\]
and obtain that
\begin{gather*}
p\big(1-\Ql^{-1}\big)\kk{f^{n+1}_1(a_0,\dots,a_{n-1})\Ql {\rm e}^u} =\kk{1+\Ql+\cdots+\Ql^n}\qe\qp_xa_n+\qe\qp_xg_{n+1}(a_0,\dots,a_{n-1}).
\end{gather*}
Hence we can solve $a_n$ uniquely by taking the integral constant to be zero.

Next we show that the difference operator $L$ determined by the equations \eqref{BJ} satisfies the relation \eqref{zh-2022-1}.
Indeed, by using the results given in \cite{carlet2004extended}, we see that the equations \eqref{BJ} imply that
\[
\qe P_xP^{-1} = -p{\rm e}^u\Ql^{-1},
\]
and therefore the relation $\frac 1p\log L = K$ holds true.

Finally we show that the differential polynomials $a_i$ determined above satisfy
the rela\-tion~\eqref{zh-2022-2}.
From the relation \eqref{zh-2022-1} it follows that $\fk{L}{pK} = 0$, which is equivalent to the fol\-lo\-wing recursion relation:
\begin{equation}
\label{BT}
\frac{1}{p}\qe\qp_x a_{k+1}= a_k\Ql^{-k}{\rm e}^{ u}-{\rm e}^{ u}\Ql^{-1} a_k,\qquad k\geq 0.
\end{equation}
Then \eqref{zh-2022-2} can be verified using this recursion relation and the initial condition \eqref{BL}. The lemma is proved.
\end{proof}

\begin{Lem}
\label{BM}
There exist unique differential polynomials $b_k$ such that
\[
\lim_{\qe\to 0}b_k = \frac{{\rm e}^{-ku}}{p^{k+1}}\qb_k
\]
and the difference operator $\ol$ defined in \eqref{BH} satisfies the relations
\[
\overline L-\frac 1p\log\overline L = K,
\]
where $K$ is the differential-difference operator given by \eqref{BN} and $\qb_k$ are polynomials in $u$ satisfying the recursion relations
\[
\qb_{k+1} = \qb_k-\int_0^uk\qb_k\,{\rm d}u,\qquad k\geq 0,
\]
with the initial condition $\qb_0 = u$.
\end{Lem}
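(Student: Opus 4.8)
The plan is to follow the proof of Lemma~\ref{BI} with $L$ replaced by $\ol$. First I would rewrite the desired relation in dressing form: substituting $\log\ol=-\qe\qp_x+\qe Q_xQ^{-1}$ into $\ol-\frac1p\log\ol=K$ and using \eqref{BN} shows that the relation is equivalent to
\[
\qe Q_xQ^{-1}=p\kk{\ol-{\rm e}^u\Ql^{-1}}=p\sum_{k\geq 0}b_k\Ql^k.
\]
By the results of \cite{carlet2004extended}, and because $\log\ol$ commutes with $\ol$ (so that $\res\fk{\ol^n}{\qe Q_xQ^{-1}}=-\qe\qp_x\res\ol^n$), this identity is in turn equivalent to the family of residue identities
\[
\res\fk{\ol^n}{p{\rm e}^u\Ql^{-1}}=\qe\qp_x\res\ol^n,\qquad n\geq 1,
\]
which are the exact analogue of \eqref{BJ}. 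This reduces the lemma to producing $b_k\in\rr(u)$ satisfying these identities.

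Next I would construct the $b_k$ and prove uniqueness. Writing $\ol=\sum_{k\geq -1}b_k\Ql^k$ with the leading coefficient $b_{-1}={\rm e}^u$ fixed by \eqref{BH}, the commutativity $\fk{\ol}{pK}=0$ (a consequence of the relation to be proved) yields, upon collecting the coefficient of $\Ql^k$, the recursion
\[
\frac1p\qe\qp_x b_k=b_{k+1}\Ql^{k+1}{\rm e}^u-{\rm e}^u\Ql^{-1}b_{k+1},\qquad k\geq -1,
\]
which is the counterpart of \eqref{BT}. Starting from $b_{-1}={\rm e}^u$ this determines $b_0,b_1,\dots$ one at a time; the case $k=-1$ already gives $\kk{1-\Ql^{-1}}b_0=\qe u_x/p$. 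At each step one must solve a first-order linear difference equation for $b_{k+1}$, which I would do order by order in $\qe$: the leading order cancels identically, and at every subsequent order one inverts an operator of the form $\qp_x+(k+1)u_x$ by an integrating factor, fixing the integration constant to zero. As in Lemma~\ref{BI}, the crucial point is that the resulting right-hand sides are total $x$-derivatives, so that each $b_{k+1}$ indeed lies in $\rr(u)$ and is unique.

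Finally I would verify the stated limit by feeding the same recursion into an expansion in $\qe$. Setting $B_k:=\lim_{\qe\to 0}b_k$ and using $\Ql^{k+1}{\rm e}^u={\rm e}^u\kk{1+(k+1)\qe u_x+O(\qe^2)}$ and $\Ql^{-1}b_{k+1}=b_{k+1}-\qe\qp_xb_{k+1}+O(\qe^2)$, the $O(\qe)$ part of the recursion becomes
\[
{\rm e}^u\kk{(k+1)u_xB_{k+1}+\qp_xB_{k+1}}=\frac1p\qp_xB_k.
\]
Substituting the ansatz $B_k=\frac{{\rm e}^{-ku}}{p^{k+1}}\qb_k$ and simplifying with the identity $(k+1)u_xB_{k+1}+\qp_xB_{k+1}={\rm e}^{-(k+1)u}\qp_x\kk{{\rm e}^{(k+1)u}B_{k+1}}$ collapses this to $\qb_{k+1}'=\qb_k'-k\qb_k$, that is, to the claimed recursion $\qb_{k+1}=\qb_k-\int_0^u k\qb_k\,{\rm d}u$; the instance $k=-1$, with $\qb_{-1}=1$ forced by $b_{-1}={\rm e}^u$, recovers the initial condition $\qb_0=u$. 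An induction on $k$ then establishes $\lim_{\qe\to 0}b_k=\frac{{\rm e}^{-ku}}{p^{k+1}}\qb_k$.

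I expect the main obstacle to be the solvability step in the second paragraph: unlike the operator $L$ of Lemma~\ref{BI}, the operator $\ol$ carries unboundedly many positive powers of $\Ql$, so its recursion is a genuine difference equation rather than a single integration, and one must verify carefully (order by order in $\qe$) that the $b_k$ close up inside the differential polynomial ring $\rr(u)$ and that the passage between the residue identities and the logarithm relation via the dressing operator $Q$ is legitimate.
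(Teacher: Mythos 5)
Your overall architecture --- reduce the operator relation to residue identities via the dressing operator, construct the $b_k$ recursively, then verify the dispersionless limit --- is exactly the one the paper intends (its proof consists of the words ``the same method as Lemma~\ref{BI}'' plus recording the recursion \eqref{BU}), and your final computation, reducing the limit statement to $\qb_{k+1}'=\qb_k'-k\qb_k$ with $\qb_{-1}=1$ forcing $\qb_0=u$, is correct. The genuine gap is the one you flag but do not close: the existence of the $b_k$ in $\rr(u)$. You build them from \eqref{BU}, asserting ``as in Lemma~\ref{BI}'' that the right-hand sides produced order by order in $\qe$ are total $x$-derivatives. In Lemma~\ref{BI} that integrability is manifest, because there the unknown $a_n$ enters as $\kk{1+\Ql+\cdots+\Ql^{n}}\qe\qp_x a_n$ and all remaining terms visibly lie in the image of $\qp_x$. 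In \eqref{BU} it is not manifest, and in fact the operator you must invert has no general reason to produce total derivatives: writing $T^{(k)}(X)=X\Ql^{k+1}{\rm e}^u-{\rm e}^u\Ql^{-1}X$ and using $\int {\rm e}^u\Ql^{-1}X=\int\kk{\Ql{\rm e}^u}X$, one finds
\[
\int T^{(k)}(X)=\int X\kk{\Ql^{k+1}-\Ql}{\rm e}^u ,
\]
which vanishes identically in $X$ only when $k=0$; for instance $\int T^{(1)}\kk{{\rm e}^{-u}}=\frac32\qe^2\int u_x^2+O\kk{\qe^3}\neq 0$. So for $k\geq1$ the order-by-order solvability of \eqref{BU} is a nontrivial consistency property of the particular $b_k$ already constructed, which your induction does not supply; as written, your scheme establishes existence only of $b_0$ and $b_1$ (the $k=-1$ and $k=0$ cases, where the relevant operators do have the total-derivative property).

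The missing idea is to run the recursion on the residue identities $\res\fk{\ol^{\,n}}{p{\rm e}^u\Ql^{-1}}=\qe\qp_x\res\ol^{\,n}$ indexed by $n$ --- which you have already set up --- rather than on \eqref{BU}, exactly as in the proof of Lemma~\ref{BI}. Writing $\ol^{\,n}=\sum_kb_{k,n}\Ql^k$, the $n$-th identity reads
\[
p\kk{b_{1,n}\Ql{\rm e}^u-{\rm e}^u\Ql^{-1}b_{1,n}}=\qe\qp_x b_{0,n},
\]
in which $b_{0,n}$ involves only $b_0,\dots,b_{n-1}$, the left-hand side is the $k=0$ (``symmetric'') operator above, each of whose $\qe$-orders lies in the image of $\qp_x$ for an \emph{arbitrary} argument, and the right-hand side is a manifest total derivative; hence $b_{1,n}$ is found by a single integration per order, with constants set to zero. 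The new unknown $b_n$ occurs in $b_{1,n}$ linearly, through an operator whose dispersionless limit is multiplication by $n{\rm e}^{(n-1)u}$ and which is therefore invertible on $\rr(u)$, so $b_n$ is then extracted with no further integration and the induction closes. Your determination of $b_0$ from the $k=-1$ case of \eqref{BU} stays as is, and once the $b_k$ exist, the relation, \eqref{BU} and your limit computation follow as in Lemma~\ref{BI}. A smaller caution: your claimed \emph{equivalence} of the relation $\ol-\frac1p\log\ol=K$ with the residue identities is too strong, since the residue pairing does not see the $\Ql^0$-coefficient nor additions of constant-coefficient series in $\ol^{\,-1}$; this is precisely why the normalization of $b_0$ and the uniqueness argument cited from \cite{carlet2004extended} are both needed to pass back from the identities to the relation.
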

\begin{proof}
The lemma can be proved by using a similar method that is used in the proof of Lemma~\ref{BI}, so we omit the details here. For later use, we write down the following recursion relations satisfied by $b_k$:
\begin{equation}
\label{BU}
\frac{1}{p}\qe\qp_x b_{k}= b_{k+1}\Ql^{k+1}{\rm e}^{u}-{\rm e}^{u}\Ql^{-1}b_{k+1},\qquad k\geq -1,
\end{equation}
with $b_{-1}={\rm e}^{ u}$.
\end{proof}
\begin{Def}
The limit fractional Volterra (LFV) hierarchy consists of the flows
\begin{equation}
\label{def-LFVH}
\qe\diff{K}{t_{1,n}} = [(L^n)_+,K],\qquad \qe\diff{K}{t_{2,n}} = -\big[\big(\ol^n\big)_-,K\big],\qquad n\geq 1,
\end{equation}
where the operator $K$ is given by \eqref{BN}, and the opertors $L$ and $\ol$ are determined by Lemma~\ref{BI} and Lemma \ref{BM} respectively. Here and in what follows, for a difference operator $\sum_k f_k\Ql^k$, we define its positive part and negative part by
\[
\biggl(\sum_k f_k\Ql^k\biggr)_+ = \sum_{k\geq 0}f_k\Ql^k,\qquad
\biggl(\sum_k f_k\Ql^k\biggr)_- = \sum_{k< 0}f_k\Ql^k.
\]
\end{Def}

It follows from Lemmas \ref{BI} and~\ref{BM} that the operators $L$ and $\ol$ commute with $K$, therefore we have the following identities:
\begin{equation*}
[(L^n)_+,K] = -[(L^n)_-,K],\qquad
\big[\big(\ol^n\big)_+,K\big]=-\big[\big(\ol^n\big)_-,K\big].
\end{equation*}
Since $[(L^n)_+,K]$ is a difference operator of the form $\sum_{j\geq -1}f_j\Ql^j$ and $[(L^n)_-,K]$ is of the form $\sum_{j\leq -1}g_j\Ql^j$, it follows that the first set of equations given in \eqref{def-LFVH} yields a hierarchy of well-defined equations of $u$. Similarly the second set of equations given in \eqref{def-LFVH} is also well-defined. Moreover, it is easy to see that the flows of the LFV hierarchy are given by differential polynomials with hydrodynamic limits.
\begin{Ex}
The simplest flows of \eqref{def-LFVH} read
\begin{gather*}
\diff{u}{t_{1,1}}=p\frac{\big(1-\Ql^{-1}\big)(\Ql-1)}{\qe^2\qp_x}{\rm e}^u,\\
\diff{u}{t_{2,1}}=\frac 1p u_x,\qquad \diff{u}{t_{2,2}}=\frac{u_x}{p^2}\qe\qp_x\frac{\Ql+1}{\Ql-1}u+\frac{1}{p^2}\qe\qp_x^2\frac{\Ql+1}{\Ql-1}u.
\end{gather*}
\end{Ex}

Let us proceed to present some basic properties of the LFV hierarchy. We will show that it is a tau-symmetric Hamiltonian integrable hierarchy. We start with proving that the flows defined in \eqref{def-LFVH} mutually commute. The following lemma is standard in the theory of integrable hierarchies.
\begin{Lem}\label{eq-L}
The following equations are satisfied by the operators $L$ and $\overline L$:
\begin{alignat*}{3}
& \qe\diff{L}{t_{1,n}}= [(L^n)_+,L],\qquad&&
\qe\diff{L}{t_{2,n}} = -\big[\big(\ol^n\big)_-,L\big],&
\\
& \qe\diff{\overline L}{t_{1,n}} = \big[(L^n)_+,\overline L\,\big],\qquad&&
\qe\diff{\overline L}{t_{2,n}} = -[\big(\ol^n\big)_-,\overline L\,\big].&
\end{alignat*}
\end{Lem}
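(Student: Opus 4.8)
The plan is to derive these four Lax equations for $L$ and $\ol$ directly from the defining flows of $K$ in \eqref{def-LFVH}, using the fact that $L$ and $\ol$ are uniquely determined by $K$ via Lemmas~\ref{BI} and~\ref{BM}. The key observation is that $L$, $\ol$, and $K$ all mutually commute: we have $\frac1p\log L = K$ and $\ol - \frac1p\log\ol = K$, so $L$ commutes with $K$ (since any operator commutes with a function of itself), and likewise $\ol$ commutes with $K$. The strategy is therefore to show that the right-hand sides of the claimed equations are \emph{consistent} with the $K$-flows, and then invoke uniqueness to conclude that $L$ and $\ol$ must evolve exactly as stated.

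First I would treat the $t_{1,n}$ flow for $L$. Consider the operator $M := \qe\,\partial L/\partial t_{1,n} - [(L^n)_+,L]$ and show it vanishes. Since $\frac1p\log L = K = \frac1p(\qe\qp_x) - \qe P_x P^{-1}$ with $L = P\Ql P^{-1}$, differentiating the relation $\frac1p\log L = K$ along $t_{1,n}$ lets me transfer information between the $L$-flow and the $K$-flow. The cleanest route is the dressing formalism: define the flow on the dressing operator $P$ by $\qe\,\partial P/\partial t_{1,n} = -(L^n)_- P$ (the standard Sato-type evolution), and verify that this induces simultaneously $\qe\,\partial L/\partial t_{1,n} = [(L^n)_+,L]$ and $\qe\,\partial K/\partial t_{1,n} = [(L^n)_+,K]$, the latter matching \eqref{def-LFVH}. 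Because $K$ determines $L$ uniquely (Lemma~\ref{BI}), the induced $L$-flow is forced to be the one we want.

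For the $t_{2,n}$ flows I would argue symmetrically. Here $\ol$ is dressed by $Q$ via $\ol = Q\Ql^{-1}Q^{-1}$, and I would posit $\qe\,\partial Q/\partial t_{2,n} = (\ol^{\,n})_- Q$, check that this yields both $\qe\,\partial\ol/\partial t_{2,n} = -[(\ol^{\,n})_-,\ol]$ and the $K$-flow $\qe\,\partial K/\partial t_{2,n} = -[(\ol^{\,n})_-,K]$ of \eqref{def-LFVH}, and again use Lemma~\ref{BM} for uniqueness. The cross terms---the $t_{1,n}$ flow acting on $\ol$ and the $t_{2,n}$ flow acting on $L$---follow the same pattern: the equation $\qe\,\partial\ol/\partial t_{1,n} = [(L^n)_+,\ol]$ is what the dressing evolution of $P$ forces once we note that $\ol$ is likewise pinned down by $K$, so its $t_{1,n}$-evolution must be compatible with $\qe\,\partial K/\partial t_{1,n} = [(L^n)_+,K]$ together with $[\ol,K]=0$.

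The main obstacle I anticipate is justifying the consistency of the dressing-level evolutions within the \emph{correct algebra of operators}: the coefficients $p_k$, $q_k$ of $P$ and $Q$ lie only in an extension of $\rr(u)$, not in $\rr(u)$ itself, so I must be careful that the splitting $(\,\cdot\,)_+/(\,\cdot\,)_-$ interacts well with the logarithm and that the combinations $P_xP^{-1}$, $Q_xQ^{-1}$ defining $\log L$, $\log\ol$ remain in $\rr(u)$ as guaranteed by \cite{carlet2004extended}. The delicate point is verifying that differentiating $\frac1p\log L = K$ reproduces precisely the bracket $[(L^n)_+,K]$ and not some correction term, which amounts to checking that $(L^n)_+$ commutes appropriately with the $\qe\qp_x$ piece inside $\log L$. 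Once this compatibility at the level of the dressing operators and the logarithms is secured, the four equations follow formally from the uniqueness statements of Lemmas~\ref{BI} and~\ref{BM}.
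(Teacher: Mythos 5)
The paper never actually proves this lemma: it is stated as ``standard in the theory of integrable hierarchies'' with no argument supplied, so the only question is whether your sketch closes it. Your strategy is indeed the standard dressing-operator route, and the $t_{1,n}$-on-$L$ case of your sketch is essentially right: from $\qe\,\qp P/\qp t_{1,n}=-(L^n)_-P$ one gets $\qe\,\qp L/\qp t_{1,n}=[(L^n)_+,L]$, and conjugating $\qe\qp_x$ by $P$ gives $\qe\,\qp(\log L)/\qp t_{1,n}=\big[-(L^n)_-,\log L\big]=p\big[(L^n)_+,K\big]$ using $\log L=pK$ and $[L^n,K]=0$; since $[(L^n)_+,K]$ is a pure $\Ql^{-1}$-term (the paper's well-definedness observation after \eqref{def-LFVH}), this auxiliary derivation induces exactly the flow of $u$ that defines $\qp/\qp t_{1,n}$. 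The cleanest way to finish is not an appeal to ``uniqueness of $L$'' but the remark that two derivations commuting with $\qp_x$ and agreeing on $u$ agree on all of $\rr(u)$, hence on every coefficient $a_k$ of $L$. This matters because the uniqueness in Lemma~\ref{BI} is stated only within differential polynomials having the limits \eqref{zh-2022-2}, a class your auxiliary flow is not a priori known to preserve; alternatively one can check that $\log L=pK$ by itself pins down $L$, since two dressing operators with the same logarithm differ by a right factor with constant coefficients, which does not change $L$.

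Beyond this, there are three concrete defects. First, a sign error: from $\qe\,\qp Q/\qp t_{2,n}=+\big(\ol^n\big)_-Q$ one obtains $\qe\,\qp\ol/\qp t_{2,n}=+\big[\big(\ol^n\big)_-,\ol\,\big]$ and $\qe\,\qp K/\qp t_{2,n}=+\big[\big(\ol^n\big)_-,K\big]$, the negatives of the flows in \eqref{def-LFVH}; the check you prescribe would reveal this, and the correct Sato equation is $\qe\,\qp Q/\qp t_{2,n}=-\big(\ol^n\big)_-Q$ (equivalently $\big(\ol^n\big)_+Q$, the two choices differing by right multiplication of $Q$ by the constant operator $\Ql^{-n}$, which is the admissible ambiguity). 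Second, and more seriously, the cross flows are not actually proved: knowing that $\qp\ol/\qp t_{1,n}$ is ``compatible with'' the $K$-flow and $[\ol,K]=0$ only yields $\big[\qe\,\qp\ol/\qp t_{1,n}-[(L^n)_+,\ol\,]\,,\,K\big]=0$, and this does \emph{not} force the expression to vanish, because the centralizer of $K$ among operators $\sum_{k\geq-1}f_k\Ql^k$ with $f_k\in\rr(u)$ is nontrivial (it contains $c_0+c_1\ol$ for constants $c_0$, $c_1$). You must either posit the cross Sato equations $\qe\,\qp Q/\qp t_{1,n}=(L^n)_+Q$ and $\qe\,\qp P/\qp t_{2,n}=-\big(\ol^n\big)_-P$ (note the projections here are forced by the shapes of $P$ and $Q$) and repeat the diagonal argument, or kill the constants by evaluating at $u^{(k)}=0$, using that every coefficient of $\qe\,\qp\ol/\qp t_{1,n}$ and of any commutator vanishes there. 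Third, a cosmetic point: the opening claim that $L$, $\ol$, $K$ ``all mutually commute'' overreaches --- $[L,\ol\,]$ is not even well defined, since each coefficient of $L\ol$ would be an infinite sum of differential polynomials; only $[L,K]=[\ol,K]=0$ holds, and fortunately only that is used. With these repairs your proposal becomes a complete proof.
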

From Lemma \ref{eq-L}, it is straightforward to derive the commutation relations of the flows~\eqref{def-LFVH}.
\begin{Th}
The flows \eqref{def-LFVH} of the LFV hierarchy mutually commute, i.e.,
\[
\bigg[\diff{}{t_{1,n}},\diff{}{t_{1,m}}\bigg]=\bigg[\diff{}{t_{2,n}},\diff{}{t_{2,m}}\bigg] =\bigg[\diff{}{t_{1,n}},\diff{}{t_{2,m}}\bigg]=0,\qquad n,m\geq 1.
\]
\end{Th}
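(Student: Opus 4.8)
The plan is to deduce the commutativity of the flows from the Lax representation of Lemma~\ref{eq-L} by the standard Zakharov--Shabat (zero-curvature) argument. Write $D_{a,n}=\qe\,\diff{}{t_{a,n}}$ for $a\in\{1,2\}$, and set $A_{1,n}=(L^n)_+$ and $A_{2,n}=-\big(\ol^n\big)_-$, so that Lemma~\ref{eq-L} reads $D_{a,n}L=[A_{a,n},L]$ and $D_{a,n}\ol=[A_{a,n},\ol]$; since $\qe$ is a constant parameter, the identity $[D_{a,n},D_{b,m}]=0$ is equivalent to the assertion of the theorem. A first technical point I would record is that each flow is an evolutionary derivation given by differential polynomials, so it commutes with $\qp_x$, hence $D_{a,n}$ commutes with $\Ql=\exp(\qe\qp_x)$ and therefore with the projections $(\cdot)_+$ and $(\cdot)_-$; this is what allows one to move $D_{a,n}$ through $(\cdot)_\pm$ freely. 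A second point is that it suffices to prove $[D_{a,n},D_{b,m}]\ol=0$ as an operator identity: since $\ol={\rm e}^u\Ql^{-1}+\cdots$, reading off the coefficient of $\Ql^{-1}$ gives $[D_{a,n},D_{b,m}]{\rm e}^u=0$, and as $[D_{a,n},D_{b,m}]$ is a derivation this forces $[D_{a,n},D_{b,m}]u=0$.

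Using $D_{a,n}L^k=[A_{a,n},L^k]$ and $D_{a,n}\ol^k=[A_{a,n},\ol^k]$ (both by the Leibniz rule), a standard Jacobi-identity computation gives $[D_{a,n},D_{b,m}]L=\big[\,D_{a,n}A_{b,m}-D_{b,m}A_{a,n}-[A_{a,n},A_{b,m}],\,L\,\big]$, and likewise with $\ol$ in place of $L$; hence everything reduces to the zero-curvature conditions
\[
D_{a,n}A_{b,m}-D_{b,m}A_{a,n}-[A_{a,n},A_{b,m}]=0 .
\]
For the two like-type cases I would expand $D_{1,n}(L^m)_+=\big([(L^n)_+,L^m]\big)_+$, split $L^m=(L^m)_++(L^m)_-$, and use that $[(L^n)_+,(L^m)_+]$ is supported on non-negative powers of $\Ql$; applying $(\cdot)_+$ to the relation $[L^n,L^m]=0$ then collapses the remaining terms and yields the zero-curvature identity for the $t_{1,\bullet}$ flows. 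The $t_{2,\bullet}$ flows are handled identically using $\big[\ol^n,\ol^m\big]=0$ and the projection $(\cdot)_-$.

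The case I expect to require the most care is the mixed one, $A_{1,n}=(L^n)_+$ against $A_{2,m}=-\big(\ol^m\big)_-$, since it involves both operators at once. Writing $P_\pm=(L^n)_\pm$ and $Q_\pm=\big(\ol^m\big)_\pm$, the mixed zero-curvature expression becomes $-\big([P_+,\ol^m]\big)_-+\big([Q_-,L^n]\big)_++[P_+,Q_-]$, and the key observation is purely one of support: $[P_+,Q_+]$ lives on non-negative powers of $\Ql$ and so contributes nothing to $(\cdot)_-$, while $[Q_-,P_-]$ lives on strictly negative powers and so contributes nothing to $(\cdot)_+$, leaving $-\big([P_+,Q_-]\big)_--\big([P_+,Q_-]\big)_++[P_+,Q_-]=0$. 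This establishes all three families of zero-curvature conditions, whence $[D_{a,n},D_{b,m}]\ol=0$ and the flows commute. I note that within this reduction one in fact has the stronger relation $[L,\ol]=0$, since $\ol$ commutes with $K$ and $L=\exp(pK)$ by \eqref{zh-2022-1}; this is consistent with, though not needed for, the mixed computation above.
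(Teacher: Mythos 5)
Your proposal is correct and takes essentially the same route as the paper: the paper simply asserts that commutativity follows in a straightforward (standard) way from the Lax equations of Lemma~\ref{eq-L}, and your zero-curvature computation --- including the reduction to the identity $[D_{a,n},D_{b,m}]\,\ol=0$ via the coefficient of $\Ql^{-1}$, the use of $[L^n,L^m]=[\ol^n,\ol^m]=0$ for the like-type cases, and the purely support-based cancellation in the mixed case --- is exactly the standard argument the paper leaves implicit. Your closing remark that $[L,\ol\,]=0$ holds in this reduction but is not needed is also consistent with the paper, which only uses that $L$ and $\ol$ commute with $K$.
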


To derive the Hamiltonian formalism of the LFV hierarchy, we need to compute the variational derivatives of the local functionals of the form
\[
\int\res L^n,\qquad \int\res\overline L^n.
\]
Let us give a general description on how such a variational derivative can be computed (see~\cite{carlet2004extended}).

Consider the 1-form
\[
\sum_{k\geq 0}f_k{\rm d}u^{(k)},\qquad
f_k\in\mathcal R(u),
\]
where ${\rm d}$ is the natural exterior differential operator on $\mathcal R(u)$. For two 1-forms $\sum_{k\geq 0}f_k{\rm d}u^{(k)}$ and $\sum_{k\geq 0}g_k{\rm d}u^{(k)}$, we denote $\sum_{k\geq 0}f_k{\rm d}u^{(k)}\sim\sum_{k\geq 0}g_k{\rm d}u^{(k)}$ if there exists another 1-form $\sum_{k\geq 0}h_k{\rm d}u^{(k)}$ such that:
\[
\sum_{k\geq 0}f_k{\rm d}u^{(k)}-\sum_{k\geq 0}g_k{\rm d}u^{(k)} = \qp_x\biggl(\sum_{k\geq 0}h_k{\rm d}u^{(k)}\biggr),
\]
here the derivation $\qp_x$ acts on the 1-form by $\qp_x{\rm d}u^{(k)} = {\rm d}u^{(k+1)}$. Now for a local functional $\int h$, $h\in\mathcal R(u)$, we can compute its variational derivative as follows:
\begin{equation}\label{form-vard}
{\rm d}h = \sum_{k\geq 0}\diff{h}{u^{(k)}}{\rm d}u^{(k)}\sim\bigg(\vard{}{u}\int h\bigg){\rm d}u.
\end{equation}

\begin{Lem}
\label{temp2}
Consider the local functionals defined by
\begin{equation*}
H_n = \frac{1}{np}\int\res L^n,\qquad n\geq 1.
\end{equation*}
Then their variational derivatives have the expressions
\begin{equation}\label{var-hn}
\vard{H_n}{u} = \frac 1p\frac{\qe\qp_x}{\Ql-1}\res L^n,\qquad
n\geq 1.
\end{equation}
\end{Lem}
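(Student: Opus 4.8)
The plan is to feed the local functionals $\int\res L^n$ into the variational calculus recorded in \eqref{form-vard}, the whole point being to use the relation $\frac1p\log L=K$ from Lemma~\ref{BI} in order to trade the unknown variation $\delta L$ for the elementary variation of $K$. The first and central step is the trace identity
\[
\delta\int\res L^n = np\int\res\kk{L^n\,\delta K}.
\]
To obtain it I would write $L^n={\rm e}^{npK}$, differentiate the exponential as $\delta L^n=np\int_0^1 {\rm e}^{snpK}(\delta K){\rm e}^{(1-s)npK}\,{\rm d}s$, and then use the cyclic property of the integrated residue, $\int\res\fk{A}{B}=0$, to collapse the $s$-integral: once the leftmost factor is cycled to the right, all remaining factors are powers of ${\rm e}^{pK}=L$ and hence commute, so the integrand becomes $s$-independent.

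Next I would extract the residue. Since $K=\frac1p\qe\qp_x+{\rm e}^u\Ql^{-1}$ and only the second summand depends on $u$, we have $\delta K={\rm e}^u\,\delta u\,\Ql^{-1}$; writing $L^n=\sum_k f^n_k\Ql^k$, the product $L^n\,\delta K$ contributes to the $\Ql^0$-coefficient only through $k=1$, so
\[
\delta\int\res L^n = np\int f^n_1\,\Ql\kk{{\rm e}^u\,\delta u}.
\]
Integrating by parts through the translation invariance $\int f\,\Ql g=\int\kk{\Ql^{-1}f}g$ and reading off the coefficient of $\delta u$ via \eqref{form-vard}, I arrive at
\[
\vard{}{u}\int\res L^n = np\,{\rm e}^u\Ql^{-1}f^n_1.
\]

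The final step is to identify this right-hand side using the defining relation \eqref{BJ}, equivalently $\fk{L^n}{K}=0$. Expanding $\res\fk{L^n}{p{\rm e}^u\Ql^{-1}}$ gives $p f^n_1\,\Ql{\rm e}^u - p{\rm e}^u\Ql^{-1}f^n_1=\qe\qp_x\res L^n$. Setting $G=p\,{\rm e}^u\Ql^{-1}f^n_1$, a one-line check shows $\Ql G=p f^n_1\,\Ql{\rm e}^u$, so the left-hand side telescopes as $(\Ql-1)G$; hence $G=\frac{\qe\qp_x}{\Ql-1}\res L^n$. Substituting back yields $\vard{}{u}\int\res L^n = n\,\frac{\qe\qp_x}{\Ql-1}\res L^n$, and dividing by $np$ produces \eqref{var-hn}.

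I expect the only genuine obstacle to be the rigorous justification of the trace identity $\delta\int\res L^n=np\int\res(L^n\delta K)$: although $L^n$ is an honest difference operator with a well-defined residue, the factor ${\rm e}^{snpK}$ entering the variation of the exponential is a differential-difference operator (because $K$ contains $\qe\qp_x$), so one must confirm that the cyclic property $\int\res\fk{A}{B}=0$ survives in this extended setting. This is exactly the kind of manipulation developed in \cite{carlet2004extended}, which I would invoke; the residue extraction and the telescoping identity above are then elementary.
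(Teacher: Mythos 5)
Your proposal is correct, and its skeleton coincides with the paper's proof: both reduce everything to the trace identity $\delta\int\res L^n = np\int\res\big(L^n\,\delta K\big)$, then extract the residue (since $\delta K={\rm e}^u\delta u\,\Ql^{-1}$, only the coefficient $f^n_1$ of $\Ql$ in $L^n$ survives), and finally telescope $\res\big[L^n,p{\rm e}^u\Ql^{-1}\big]=(\Ql-1)\big(p{\rm e}^u\Ql^{-1}f^n_1\big)$ against \eqref{BJ} --- your last two steps are literally the computation in the paper. Where you differ is in how the trace identity is obtained. The paper does not exponentiate: it quotes the identity \eqref{temp3}, $\res L^{n-1}{\rm d}L\sim-\res L^n{\rm d}\big(\qe P_xP^{-1}\big)$, from \cite{carlet2004extended}, and combines it with the equality $\qe P_xP^{-1}=-p{\rm e}^u\Ql^{-1}$ established in the proof of Lemma \ref{BI}; since this gives ${\rm d}\big(\qe P_xP^{-1}\big)=-p\,{\rm d}K$, the trace identity follows in one line, with every object involved an honest difference operator. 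Your Duhamel argument is a self-contained derivation of the same fact, but note that the intermediate factors ${\rm e}^{snpK}$ for $0<s<1$ are not series in integer powers of $\Ql$, and they cannot be defined by the exponential series either (the leading term of $K$ is ${\rm e}^u$, so the series does not converge even $\qe$-adically); the only sensible definition is through the dressing operator, ${\rm e}^{snpK}=P\Ql^{sn}P^{-1}=L^{sn}$, after which you must work in the larger algebra of difference operators with shifts in $sn+\mathbb Z$ and check that $\int\res[A,B]=0$ persists there (it does, because $\int f\,\Ql^a g=\int\big(\Ql^{-a}f\big)g$ for any real $a$). You flagged exactly this point as the obstacle, and it is the right one; but once the dressing operator is brought in to justify it, you are essentially re-proving the identity \eqref{temp3} that the paper simply cites. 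In short, your route is more transparent about the mechanism (variation of the trace of an exponential) and is self-contained modulo that justification, while the paper's route is shorter and rigorous by citation; both are sound and lead to the same residue extraction and telescoping.
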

\begin{proof}
We need the following identity whose proof can be found in \cite{carlet2004extended}:
\begin{equation}\label{temp3}
\res L^{n-1}{\rm d}L\sim-\res L^n{\rm d}\big(\qe P_xP^{-1}\big),
\end{equation}
where the operator $P$ is defined in \eqref{BQ}. From this identity and the relation \eqref{zh-2022-1} it follows that
\[
{\rm d}\bigg(\frac{1}{np}\res L^n\bigg)\sim\frac{1}{p}\res L^{n-1}{\rm d}L\sim-\frac{1}{p}\res L^n{\rm d}\big(\qe P_xP^{-1}\big) = \res L^n\big({\rm e}^u{\rm d}u\Ql^{-1}\big).
\]
Here and henceforth, by abusing the notations we denote
\[
{\rm d}\bigg(\sum_{n}f_k\Ql^k\bigg)= \sum_{k}({\rm d}f_k)\Ql^k
\]
for a difference operator $\sum_{k}f_k\Ql^k$.

Let us represent $L^n$ in the form $L^n= \sum_k a_{k,n}\Ql^k$, we then arrive at the following relations:
\[
{\rm d}\bigg(\frac{1}{np}\res L^n\bigg)\sim a_{1,n}\Ql\big({\rm e}^u{\rm d}u\big)\sim \big({\rm e}^u\Ql^{-1}a_{1,n}\big){\rm d}u.
\]
Hence by using the formula \eqref{form-vard} we obtain
\[
\vard{H_n}{u} = {\rm e}^u\Ql^{-1}a_{1,n},\qquad n\geq 1.
\]
Finally by using the identity $\res[L^n,p{\rm e}^u\Ql^{-1}] = \res\qe\qp_xL^n$ obtained from Lemma \ref{BI}, we prove the desired identity \eqref{var-hn}. The lemma is proved.
\end{proof}

\begin{Lem}
\label{temp4}
The variational derivatives of the local functionals
\begin{equation*}
\overline H_n = \int\bigg(\!\res \frac{\overline L^{n+1}}{n+1}-\res \frac{\overline L^n}{np}\bigg),\qquad n\geq 1
\end{equation*}
can be represented as
\begin{equation}\label{var-bhn}
\vard{\overline H_n}{u} = \frac 1p\frac{\qe\qp_x}{\Ql-1}\res \overline L^n,\qquad n\geq 1.
\end{equation}
\end{Lem}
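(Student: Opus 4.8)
The plan is to follow the proof of Lemma~\ref{temp2} line by line, with $L$, $P$ replaced by $\ol$, $Q$ and with the relation of Lemma~\ref{BM} used in place of \eqref{zh-2022-1}; the two-term combination defining $\overline H_n$ will turn out to be exactly what is needed to make the computation close. I would use two inputs. The first is the $\ol$-analogue of \eqref{temp3}, which follows from \cite{carlet2004extended} by the same argument: writing the identity in the universal form $\res M^{n-1}{\rm d}M\sim\res M^n{\rm d}\log M$, valid for both $M=L$ and $M=\ol$, and recalling that $\log\ol=-\qe\qp_x+\qe Q_xQ^{-1}$ (as opposed to $\log L=\qe\qp_x-\qe P_xP^{-1}$), one obtains
\[
\res\ol^{n-1}{\rm d}\ol\sim\res\ol^n{\rm d}\big(\qe Q_xQ^{-1}\big),
\]
with a sign opposite to that in \eqref{temp3}. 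The second input is the relation $\ol-\frac1p\log\ol=K$ of Lemma~\ref{BM}, which I rewrite as
\[
\qe Q_xQ^{-1}=\qe\qp_x+\log\ol=p\ol-p{\rm e}^u\Ql^{-1},
\]
so that ${\rm d}\big(\qe Q_xQ^{-1}\big)=p\,{\rm d}\ol-p\big({\rm e}^u{\rm d}u\big)\Ql^{-1}$, and hence ${\rm d}\ol=\frac1p{\rm d}\big(\qe Q_xQ^{-1}\big)+\big({\rm e}^u{\rm d}u\big)\Ql^{-1}$.

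Next I would differentiate $\overline H_n$ and use the cyclic property $\res{\rm d}\ol^m\sim m\res\ol^{m-1}{\rm d}\ol$ to get ${\rm d}\overline H_n\sim\res\ol^n{\rm d}\ol-\frac1p\res\ol^{n-1}{\rm d}\ol$. Substituting the expression for ${\rm d}\ol$ into the first residue and applying the identity above in the form $\res\ol^n{\rm d}\big(\qe Q_xQ^{-1}\big)\sim\res\ol^{n-1}{\rm d}\ol$ yields
\[
\res\ol^n{\rm d}\ol\sim\frac1p\res\ol^{n-1}{\rm d}\ol+\res\ol^n\big({\rm e}^u{\rm d}u\big)\Ql^{-1}.
\]
This is where the choice of $\overline H_n$ pays off: the term $\frac1p\res\ol^{n-1}{\rm d}\ol$—which on its own would produce an unresolved recursion lowering the power of $\ol$—is cancelled exactly by the contribution of $-\frac1p\res\ol^{n-1}{\rm d}\ol$, leaving the clean expression ${\rm d}\overline H_n\sim\res\ol^n\big({\rm e}^u{\rm d}u\big)\Ql^{-1}$.

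To read off the variational derivative I would write $\ol^n=\sum_k\bar a_{k,n}\Ql^k$; extracting the residue gives $\res\ol^n\big({\rm e}^u{\rm d}u\big)\Ql^{-1}=\bar a_{1,n}\Ql\big({\rm e}^u{\rm d}u\big)$, and moving the shift operator past the coefficient modulo the image of $\qp_x$, exactly as in Lemma~\ref{temp2}, gives $\vard{\overline H_n}{u}={\rm e}^u\Ql^{-1}\bar a_{1,n}$. Finally, since $\ol$ commutes with $K$, the same argument that produces \eqref{BJ} gives the residue identity $\res\fk{\ol^n}{p{\rm e}^u\Ql^{-1}}=\qe\qp_x\res\ol^n$; computing the left-hand side from $\ol^n=\sum_k\bar a_{k,n}\Ql^k$ gives
\[
p\bar a_{1,n}\big(\Ql{\rm e}^u\big)-p{\rm e}^u\big(\Ql^{-1}\bar a_{1,n}\big)=\qe\qp_x\res\ol^n.
\]
Setting $\bar v={\rm e}^u\Ql^{-1}\bar a_{1,n}=\vard{\overline H_n}{u}$ and using $\bar a_{1,n}\big(\Ql{\rm e}^u\big)=\Ql\bar v$, this becomes $p(\Ql-1)\bar v=\qe\qp_x\res\ol^n$, which inverts to the desired formula \eqref{var-bhn}.

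The main obstacle I expect is at the level of justification rather than computation: establishing the $\ol$-analogue of \eqref{temp3} with the correct sign, and checking that $\frac1p\qe Q_xQ^{-1}=\ol-{\rm e}^u\Ql^{-1}$ is a bona fide difference operator with coefficients in $\rr(u)$, so that all manipulations with $\sim$ are legitimate. Both of these rest on the properties of $\log\ol$ recorded in \cite{carlet2004extended} together with Lemma~\ref{BM}; once they are in place, the algebraic heart of the proof is just the cancellation displayed above, which is forced by the particular linear combination in the definition of $\overline H_n$.
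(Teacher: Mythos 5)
Your proof is correct and is essentially the paper's own argument: the same $\ol$-analogue of \eqref{temp3} (with the sign dictated by $\log\ol=-\qe\qp_x+\qe Q_xQ^{-1}$), the same use of Lemma~\ref{BM} to produce the cancellation that leaves ${\rm d}\overline H_n\sim\res\ol^n\big({\rm e}^u{\rm d}u\big)\Ql^{-1}$, and the same concluding residue computation as in Lemma~\ref{temp2}, based on $\res\big[\ol^n,p{\rm e}^u\Ql^{-1}\big]=\qe\qp_x\res\ol^n$. The paper merely packages the cancellation as evaluating $\res\ol^n{\rm d}K$ in two ways (using ${\rm d}K=\big({\rm e}^u{\rm d}u\big)\Ql^{-1}$), which is the identical algebra to your substitution for ${\rm d}\ol$.
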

\begin{proof}
Similar to the identity\eqref{temp3} we have
\[
\res \overline L^{n-1}{\rm d}\overline L\sim\res \overline L^n{\rm d}\big(\qe Q_xQ^{-1}\big).
\]
By using Lemma \ref{BM} we obtain the relations
\begin{align*}
\res \overline L^n{\rm d}K &= \res \overline L^n{\rm d}\overline L-\frac 1p\res \overline L^n{\rm d}\big(\qe Q_xQ^{-1}\big)
\\
&\sim \res \overline L^n{\rm d}\overline L-\frac 1p \res\overline L^{n-1}{\rm d}\overline L
\\
&\sim \res {\rm d}\bigg(\frac{\overline L^{n+1}}{n+1}- \frac{\overline L^n}{np}\bigg).
\end{align*}
We then arrive at \eqref{var-bhn} by applying the calculation similar to the one we do in the proof of Lemma \ref{temp2}. The lemma is proved.
\end{proof}

\begin{Th}
The flows \eqref{def-LFVH} can be represented as Hamiltonian systems as follows:
\[
\qe\diff{u}{t_{1,n}} = \{u(x), H_n\},\qquad
\qe\diff{u}{t_{2,n}} = \big\{u(x),\overline H_n\big\},\qquad n\geq 1,
\]
where the Poisson bracket $\{-,-\}$ is defined by the Hamiltonian operator
\begin{equation*}
\mathscr{P} = p\frac{\big(1-\Ql^{-1}\big)(\Ql-1)}{\qe^2\qp_x}.
\end{equation*}
\end{Th}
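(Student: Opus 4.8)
\emph{The plan is to} convert each operator equation in \eqref{def-LFVH} into a scalar evolution equation for $u$, to write its right-hand side through $\res L^n$ and $\res\ol^n$, and then to match it against the Hamiltonian vector fields produced by the variational derivatives \eqref{var-hn} and \eqref{var-bhn}. First I would exploit the special form of $K$: since $K=\frac1p\qe\qp_x+{\rm e}^u\Ql^{-1}$ and only the coefficient ${\rm e}^u$ carries time dependence, the left-hand sides of \eqref{def-LFVH} are supported purely on $\Ql^{-1}$,
\[
\qe\diff{K}{t_{1,n}}=\qe\,{\rm e}^u\diff{u}{t_{1,n}}\Ql^{-1},\qquad
\qe\diff{K}{t_{2,n}}=\qe\,{\rm e}^u\diff{u}{t_{2,n}}\Ql^{-1},
\]
so the entire content of each flow is carried by the $\Ql^{-1}$-coefficient of the corresponding commutator.

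The next step is to read off that coefficient. From Lemma~\ref{BI} we have $\frac1p\log L=K$, hence $[L,K]=0$ and $[L^n,K]=0$; likewise Lemma~\ref{BM} gives $[\ol,K]=0$. Consequently
\[
[(L^n)_+,K]=-[(L^n)_-,K],\qquad
-\big[\big(\ol^n\big)_-,K\big]=\big[\big(\ol^n\big)_+,K\big],
\]
and since the left member of the first identity has no terms below $\Ql^{-1}$ while the right member has none above, the commutator collapses to a single $\Ql^{-1}$-term. Expanding $[(L^n)_+,K]$ and keeping only that term, the derivation part $\frac1p\qe\qp_x$ produces nothing below $\Ql^{0}$, and in $[(L^n)_+,{\rm e}^u\Ql^{-1}]$ only the $\Ql^{0}$-coefficient of $(L^n)_+$, namely $\res L^n$, can contribute, giving $\Ql^{-1}$-coefficient ${\rm e}^u(1-\Ql^{-1})\res L^n$. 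Comparison with the previous display then yields
\[
\qe\diff{u}{t_{1,n}}=(1-\Ql^{-1})\res L^n,\qquad
\qe\diff{u}{t_{2,n}}=(1-\Ql^{-1})\res\ol^n,
\]
the second identity following from the same computation applied to $\ol$.

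Finally I would substitute the variational derivatives. Because $\mathscr P$ has coefficients independent of $u$, formula \eqref{var-hn} gives at once
\[
\mathscr P\,\vard{H_n}{u}=p\frac{\big(1-\Ql^{-1}\big)(\Ql-1)}{\qe^2\qp_x}\cdot\frac1p\frac{\qe\qp_x}{\Ql-1}\res L^n=\frac{1-\Ql^{-1}}{\qe}\res L^n,
\]
and identically $\mathscr P\,\vard{\overline H_n}{u}=\frac{1-\Ql^{-1}}{\qe}\res\ol^n$ from \eqref{var-bhn}; comparing with the previous display identifies the flows with $\{u(x),H_n\}$ and $\{u(x),\overline H_n\}$. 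It then remains only to confirm that $\mathscr P$ is a genuine Hamiltonian operator: skew-symmetry is immediate from $\qp_x^{*}=-\qp_x$, $\Ql^{*}=\Ql^{-1}$ and the adjoint-invariance of $(1-\Ql^{-1})(\Ql-1)=\Ql+\Ql^{-1}-2$, while the Jacobi identity holds automatically since the coefficients of $\mathscr P$ do not involve $u$.

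\emph{The step I expect to be the main obstacle} is the collapse of the commutator in the second paragraph: one must verify that every contribution above and below $\Ql^{-1}$ cancels, leaving precisely ${\rm e}^u(1-\Ql^{-1})(\res L^n)\Ql^{-1}$. This rests entirely on the commutativity $[L^n,K]=0$, equivalently on the residue identity $\res\big[L^n,p\,{\rm e}^u\Ql^{-1}\big]=\qe\qp_x\res L^n$ from Lemma~\ref{BI}; once it is in place, the remaining manipulations are the routine cancellations displayed above.
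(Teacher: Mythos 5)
Your proposal is correct and takes essentially the same route as the paper: both reduce the operator equations \eqref{def-LFVH} to the scalar flows $\qe\diff{u}{t_{1,n}}=\big(1-\Ql^{-1}\big)\res L^n$ and $\qe\diff{u}{t_{2,n}}=\big(1-\Ql^{-1}\big)\res\ol^n$, and then match these against $\qe\mathscr P\vard{H_n}{u}$ and $\qe\mathscr P\vard{\overline H_n}{u}$ using \eqref{var-hn} and \eqref{var-bhn}. The only cosmetic differences are that for the $t_{2,n}$-flows the paper extracts the $\Ql^{-1}$-coefficient from the negative part (as $\frac1p\qe\qp_x b_{-1,n}$) and then invokes $\big[\ol^n,K\big]=0$, whereas you pass to $\big[\big(\ol^n\big)_+,K\big]$ first, and that you justify explicitly why the constant-coefficient operator $\mathscr P$ is Hamiltonian, a point the paper dismisses as obvious.
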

\begin{proof}
From the definition \eqref{def-LFVH} it is straightforward to see that
\[
\qe\diff{u}{t_{1,n}} = \big(1-\Ql^{-1}\big)\res L^n =\qe \mathscr{P}\vard{H_n}{u} = \qe\{u(x),H_n\}.
\]
As for the flows $\diff{u}{t_{2,n}}$, let us denote $\overline L^n=\sum_{k}b_{k,n}\Ql^k$, then by combining the definition \eqref{def-LFVH} and the definitions of $L$ and $\ol$, we arrive at
\[
\qe {\rm e}^u\diff{u}{t_{2,n}} = \frac 1p \qe\qp_x b_{-1,n} = {\rm e}^u\big(1-\Ql^{-1}\big)\res \overline L^n,
\]
which implies the Hamiltonian formalism of the flows $\diff{u}{t_{2,n}}$ due to Lemma~\ref{temp4}. Finally it is obvious that $\mathscr P$ is indeed a Hamiltonian operator and thus the theorem is proved.
\end{proof}

Finally we are going to consider the tau-structure of the LFV hierarchy. The constructions and proofs are standard and very similar to those presented in~\cite{liu2018fractional}.
\begin{Lem}\label{temp5}
We define the following functions for $k,l\geq 1{:}$
\begin{gather}\label{o11}
\Qo_{1,k;1,l}= \sum_{n=1}^l\frac{\Ql^n-1}{\Ql-1}\big(\!\res\big(\Ql^{-n}L^l\big)\res\big(L^k\Ql^n\big)\big),
\\
\label{o21}
\Qo_{2,k;1,l} =\Qo_{1,l;2,k} =\sum_{n=1}^l\frac{\Ql^n-1}{\Ql-1}\big(\!\res\big(\Ql^{-n}L^l\big)\res\big(\overline L^k\Ql^n\big)\big),
\\
\label{o22}
\Qo_{2,k;2,l} =\sum_{n=1}^l\frac{\Ql^n-1}{\Ql-1}\big(\!\res\big(\Ql^{-n}\overline L^l\big)\res\big(\overline L^k\Ql^n\big)\big).
\end{gather}
Then the following relations hold true:
\begin{gather*}
\qe\diff{}{t_{1,l}}\res L^k = \qe\diff{}{t_{1,k}}\res L^l = (\Ql-1)\Qo_{1,k;1,l},
\\
\qe\diff{}{t_{1,l}}\res \overline L^k = \qe\diff{}{t_{2,k}}\res L^l = (\Ql-1)\Qo_{2,k;1,l},
\\
\qe\diff{}{t_{2,l}}\res \overline L^k = \qe\diff{}{t_{2,k}}\overline \res L^l = (\Ql-1)\Qo_{2,k;2,l}.
\end{gather*}
\end{Lem}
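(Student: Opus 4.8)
The plan is to reduce every identity to the residue of a commutator and then to recognise that such a residue is automatically a total $(\Ql-1)$-difference whose potential is exactly the corresponding $\Qo$. As a first step, Lemma~\ref{eq-L} together with the Leibniz rule gives, for every $k\geq 1$,
\begin{gather*}
\qe\diff{L^k}{t_{1,l}}=\fk{(L^l)_+}{L^k},\qquad
\qe\diff{L^k}{t_{2,l}}=-\fk{(\ol^l)_-}{L^k},\\
\qe\diff{\ol^k}{t_{1,l}}=\fk{(L^l)_+}{\ol^k},\qquad
\qe\diff{\ol^k}{t_{2,l}}=-\fk{(\ol^l)_-}{\ol^k}.
\end{gather*}
Taking residues turns each left-hand side occurring in the lemma into the residue of a commutator of the form $\fk{(L^l)_\pm}{\,\cdot\,}$ or $\fk{(\ol^l)_\pm}{\,\cdot\,}$.

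The central computation is the elementary identity that for any two difference operators $A=\sum_i a_i\Ql^i$ and $B=\sum_j b_j\Ql^j$ with coefficients in $\rr(u)$ one has
\[
\res\fk{A}{B}=\sum_i(\Ql^i-1)\kk{\Ql^{-i}(a_i)\,b_{-i}}=(\Ql-1)\sum_i\frac{\Ql^i-1}{\Ql-1}\kk{\res(\Ql^{-i}A)\,\res(B\Ql^i)},
\]
which follows from $\res(AB)=\sum_i a_i\Ql^i(b_{-i})$, $\res(BA)=\sum_i b_{-i}\Ql^{-i}(a_i)$ and the rewriting $a_i\Ql^i(b_{-i})=\Ql^i\kk{\Ql^{-i}(a_i)b_{-i}}$. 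With $A=(L^l)_+$ and $B=L^k$ one reads off precisely the summand of $\Qo_{1,k;1,l}$ in~\eqref{o11}, the sum truncating at $i=l$ because $L^l=\Ql^l+\cdots$ has top differential degree $l$; the remaining identifications use $\res(\Ql^{-n}L^l)=\Ql^{-n}(a_{n,l})$, $\res(L^k\Ql^n)=a_{-n,k}$ and their $\ol$-analogues.

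To obtain the middle equalities (the tau-symmetry proper) I would use two facts. First, the residue of a commutator of operators supported on the same half-line vanishes, $\res\fk{A_+}{B_+}=a_0b_0-b_0a_0=0$ and likewise $\res\fk{A_-}{B_-}=0$. Second, by Lemmas~\ref{BI} and~\ref{BM} both $L$ and $\ol$ commute with $K$, hence with each other, so $\fk{L^l}{L^k}=\fk{L^l}{\ol^k}=\fk{\ol^l}{\ol^k}=0$. For the mixed chain both $\qe\diff{}{t_{1,l}}\res\ol^k$ and $\qe\diff{}{t_{2,k}}\res L^l$ collapse, after discarding the like-signed parts, to the single residue $\res\fk{(L^l)_+}{(\ol^k)_-}$, so their equality is automatic. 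For the $t_{1}$--$t_{1}$ and $t_{2}$--$t_{2}$ chains the two derivatives give the two residues $\res\fk{(L^l)_+}{(L^k)_-}$, $\res\fk{(L^k)_+}{(L^l)_-}$ (respectively their $\ol$-versions), which are matched by splitting $0=\res\fk{(L^l)_++(L^l)_-}{(L^k)_++(L^k)_-}$ into four terms and dropping the two like-signed ones.

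The step I expect to be the main obstacle is the precise matching of the summation ranges in the two cases involving $\ol$: since $\ol={\rm e}^u\Ql^{-1}+b_0+b_1\Ql+\cdots$ carries arbitrarily high positive powers, $\ol^l$ is not bounded from above, and a naive application of the displayed identity produces a sum whose length is governed by the operator being differentiated rather than by the upper limit $l$ written in~\eqref{o21} and~\eqref{o22}. The reconciliation rests on the truncation $\res(\ol^k\Ql^n)=0$ for $n>k$ together with the symmetry $\res\fk{(\ol^l)_+}{(\ol^k)_-}=\res\fk{(\ol^k)_+}{(\ol^l)_-}$ proved above, which shows that the surplus tail terms make no contribution to the $(\Ql-1)$-image and that the potential may legitimately be written with the stated range; this bookkeeping is delicate but, following the template of~\cite{carlet2004extended,liu2018fractional}, routine. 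Assembling these pieces yields the three asserted equalities with right-hand sides $(\Ql-1)\Qo_{1,k;1,l}$, $(\Ql-1)\Qo_{2,k;1,l}$ and $(\Ql-1)\Qo_{2,k;2,l}$.
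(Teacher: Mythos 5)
Your overall strategy --- differentiate powers via Lemma~\ref{eq-L}, take residues, and convert the residue of a commutator into a manifest $(\Ql-1)$-image through the identity $\res\fk{A}{B}=\sum_i(\Ql^i-1)\big(\res\big(\Ql^{-i}A\big)\res\big(B\Ql^i\big)\big)$ --- is the standard one (the paper gives no proof of this lemma, deferring to~\cite{liu2018fractional}, where the analogous statement is proved exactly this way), and your treatment of \eqref{o11} and \eqref{o21} is correct. One logical slip: ``$L$ and $\ol$ commute with $K$, hence with each other'' is a non sequitur, since commuting with a common operator does not imply mutual commutation. It is also harmless: you never need $\fk{L^l}{\ol^k}=0$, because the mixed chain collapses on both sides to $\res\fk{(L^l)_+}{(\ol^k)_-}$, and the like-index chains use only $\fk{L^l}{L^k}=0$ and $\fk{\ol^l}{\ol^k}=0$.

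The genuine gap is the step you yourself flag as the main obstacle and then declare routine: for \eqref{o22} the surplus tail terms do \emph{not} drop out of the $(\Ql-1)$-image. Your identity applied to $\qe\diff{}{t_{2,k}}\res\ol^l=-\res\fk{(\ol^k)_-}{\ol^l}$ gives
\[
\qe\diff{}{t_{2,k}}\res\ol^l=\sum_{n=1}^{k}(\Ql^n-1)\big(\res\big(\Ql^{-n}\ol^l\big)\res\big(\ol^k\Ql^n\big)\big),
\]
where the truncation occurs at $n=k$, dictated by $\res\big(\ol^k\Ql^n\big)=0$ for $n>k$; the other factor $\res\big(\Ql^{-n}\ol^l\big)$ never vanishes, because $\ol^l$ contains arbitrarily high positive powers of $\Ql$. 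Hence for $k>l$ this expression contains terms with $l<n\le k$ lying outside the range printed in \eqref{o22}, and these are nonzero. Concretely, for $k=2$, $l=1$ the extra term is $(\Ql^2-1)\big(\res(\Ql^{-2}\ol)\res(\ol^2\Ql^2)\big)$; since $\res(\ol^2\Ql^2)={\rm e}^{u+\Ql^{-1}u}$ and, by Lemma~\ref{BM}, $\lim_{\qe\to 0}\res(\Ql^{-2}\ol)=p^{-3}{\rm e}^{-2u}\big(u-u^2/2\big)$, this term equals $\frac{2\qe}{p^3}(1-u)u_x+O\big(\qe^2\big)\neq 0$. Indeed, the tau-symmetry $\qe\diff{}{t_{2,1}}\res\ol^2=\qe\diff{}{t_{2,2}}\res\ol$ holds only thanks to this term: the left-hand side is $\frac{2\qe}{p^3}(1+u)u_x+O\big(\qe^2\big)$, while the $n=1$ term alone contributes $\frac{4\qe}{p^3}uu_x+O\big(\qe^2\big)$. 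What your machinery actually proves is the third chain of identities with the sum in \eqref{o22} taken over all $n\geq 1$ (it terminates by itself at $n=k$), equivalently with $\ol^k$ and $\ol^l$ interchanged inside the summand; with the literal upper limit $l$ the identity $\qe\diff{}{t_{2,k}}\res\ol^l=(\Ql-1)\Qo_{2,k;2,l}$ is false whenever $k>l$, and then the symmetry $\Qo_{2,k;2,l}=\Qo_{2,l;2,k}$ claimed in Lemma~\ref{temp6} fails as well --- so the printed range should be read as a typo. The honest completion of your proof is to establish this corrected statement, which your computation does yield, rather than to assert that the tail makes no contribution.
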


\begin{Lem}
\label{temp6}
The functions defined in \eqref{o11}--\eqref{o22} satisfy the following identities:
\[
\Qo_{i,k;j,l} = \Qo_{j,l;i,k},\qquad
\diff{\Qo_{j,l;m,n}}{t_{i,k}} = \diff{\Qo_{i,k;m,n}}{t_{j,l}},\qquad
i,j,m=1,2,\quad k,l,n\geq 1.
\]
\end{Lem}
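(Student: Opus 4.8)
The plan is to deduce both identities from Lemma~\ref{temp5} together with the commutativity of the flows \eqref{def-LFVH}, leaving as the only substantial point a computation in the dispersionless limit. Throughout, for a difference operator $D=\sum_k f_k\Ql^k$ I write $[D]_k=f_k$, so that $\res\big(\Ql^{-n}D\big)=\Ql^{-n}[D]_n$ and $\res\big(D\Ql^n\big)=[D]_{-n}$.

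For the first identity, the mixed cases $(i,j)=(1,2),(2,1)$ hold by the very definition \eqref{o21}, so only the same-type relations $\Qo_{1,k;1,l}=\Qo_{1,l;1,k}$ and $\Qo_{2,k;2,l}=\Qo_{2,l;2,k}$ require proof. Here I would read off from Lemma~\ref{temp5} that $(\Ql-1)\Qo_{1,k;1,l}=\qe\diff{}{t_{1,l}}\res L^k$ and $(\Ql-1)\Qo_{1,l;1,k}=\qe\diff{}{t_{1,k}}\res L^l$, and likewise for $\ol$; since the same lemma asserts $\qe\diff{}{t_{1,l}}\res L^k=\qe\diff{}{t_{1,k}}\res L^l$ and $\qe\diff{}{t_{2,l}}\res\ol^k=\qe\diff{}{t_{2,k}}\res\ol^l$, it follows that $(\Ql-1)\big(\Qo_{i,k;i,l}-\Qo_{i,l;i,k}\big)=0$. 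As the kernel of $\Ql-1=\exp(\qe\qp_x)-1$ on $\rr(u)$ consists precisely of the elements of $\mathbb C[[\qe]]$, the difference $\Qo_{i,k;i,l}-\Qo_{i,l;i,k}$ is a numerical constant.

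It remains to show this constant vanishes, and this is where the real work lies. I would compute it by evaluating the explicit formulas \eqref{o11} and \eqref{o22} on a constant field $u\equiv u_0$: there $\Ql$ acts as the identity, the factor $\frac{\Ql^n-1}{\Ql-1}$ becomes multiplication by $n$, and $L,\ol$ reduce to commuting Laurent series $\mathcal L,\overline{\mathcal L}$ in a formal variable $z$ replacing $\Ql$. Each $\Qo_{i,k;i,l}|_{u_0}$ then becomes a finite sum of the form $\sum_{n\geq 1}n\,[\mathcal L^l]_n[\mathcal L^k]_{-n}$ (finite because one of the two coefficient factors vanishes for large $n$), and similarly with $\overline{\mathcal L}$. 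The key point is the total-derivative residue identity
\[
\sum_{n\in\mathbb Z}n\,[\mathcal L^l]_n[\mathcal L^k]_{-n}=\res_z\Big(\mathcal L^k\frac{{\rm d}\mathcal L^l}{{\rm d}z}\Big)=\frac{l}{k+l}\,\res_z\frac{{\rm d}\mathcal L^{k+l}}{{\rm d}z}=0,
\]
whose vanishing, upon splitting the two-sided sum into its $n>0$ and $n<0$ halves (which match the two finite sums above), reads exactly $\Qo_{i,k;i,l}|_{u_0}-\Qo_{i,l;i,k}|_{u_0}=0$. Hence the constant is zero and the symmetry follows; the argument for $\overline{\mathcal L}$ is identical.

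For the second identity I would first invoke the symmetry just proved to rewrite the claim as $\diff{}{t_{i,k}}\Qo_{m,n;j,l}=\diff{}{t_{j,l}}\Qo_{m,n;i,k}$. Applying the operator $\Ql-1$, which commutes with every flow because $\qp_x$ does, and using Lemma~\ref{temp5} in the form $(\Ql-1)\Qo_{m,n;a,b}=\qe\diff{}{t_{a,b}}R_{m,n}$, where $R_{m,n}=\res L^n$ for $m=1$ and $R_{m,n}=\res\ol^n$ for $m=2$, the two sides become $\qe\diff{}{t_{i,k}}\diff{}{t_{j,l}}R_{m,n}$ and $\qe\diff{}{t_{j,l}}\diff{}{t_{i,k}}R_{m,n}$, which coincide because the flows of the LFV hierarchy mutually commute. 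Thus the difference of the two sides is annihilated by $\Ql-1$, hence is a constant, which vanishes by the same evaluation on a constant field. The main obstacle is therefore concentrated entirely in the vanishing of the additive constant in the symmetry relation; once the total-derivative residue identity is in hand, the remainder is formal bookkeeping with Lemma~\ref{temp5} and the commutativity of the flows.
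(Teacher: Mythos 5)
Your proof shares its skeleton with the paper's: apply $\Ql-1$, conclude that the discrepancy is a constant, then kill the constant by evaluating at a constant field; your treatment of the second set of identities (Lemma~\ref{temp5} plus commutativity of the flows plus the same constant argument) is essentially identical to the paper's, which phrases the last step via differential degree instead of evaluation. The genuine difference is in how the constant is killed in the same-type cases. The paper specializes to $u=0$: there $\ol$ collapses to $\Ql^{-1}$ by Lemma~\ref{BM}, so the case $i=j=2$ trivializes, while for $i=j=1$ it evaluates $\Qo_{1,k;1,l}|_{u=0}$ as an explicit hypergeometric sum and proves its $(k,l)$-symmetry by exhibiting a Gosper antidifference. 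Your identity
\[
\sum_{n\in\mathbb Z}n\,[\mathcal L^l]_n[\mathcal L^k]_{-n}
=\res_z\Big(\mathcal L^k\frac{{\rm d}\mathcal L^l}{{\rm d}z}\Big)
=\frac{l}{k+l}\,\res_z\frac{{\rm d}\mathcal L^{k+l}}{{\rm d}z}=0,
\]
where $\mathcal L$ denotes $L$ evaluated at a constant field $u\equiv u_0$, replaces the entire Gosper computation: it needs no knowledge of the actual coefficients of $L$ at the constant field, works for every $u_0$, and for the case $i=j=1$ it is correct and strictly cleaner, because the defining sum \eqref{o11} truncates at $n=l$ exactly where $[\mathcal L^l]_n$ dies, so it really is the $n>0$ half of your two-sided sum.

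The gap is in the case $i=j=2$, where your phrase ``and similarly with $\overline{\mathcal L}$'' hides a real issue. Unlike $L^l$, the operator $\ol^{\,l}$ contains arbitrarily high positive powers of $\Ql$, so the factor $\res\big(\Ql^{-n}\ol^{\,l}\big)$ does \emph{not} vanish for $n>l$; the factor that vanishes for large $n$ is $\res\big(\ol^{\,k}\Ql^n\big)$, and only for $n>k$. Hence the defining sum \eqref{o22}, which stops at $n=l$, is not the $n>0$ half of your two-sided sum when $k>l$: the terms with $l<n\leq k$ are missing and are generically nonzero. In fact, read literally, \eqref{o22} is inconsistent with the symmetry you are proving: for $k=2$, $l=1$ one finds at a constant field $\Qo_{2,1;2,2}-\Qo_{2,2;2,1}=2{\rm e}^{2u_0}b_2(u_0)$, whose dispersionless limit $\frac{2}{p^3}\big(u_0-\frac{u_0^2}{2}\big)$ (by Lemma~\ref{BM}) is nonzero for generic $u_0$. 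So the upper limit in \eqref{o22} must be read as $k$ --- this is the truncation that actually arises from $\res\big[\big(\ol^{\,l}\big)_+,\ol^{\,k}\big]$ in the derivation of Lemma~\ref{temp5} --- and with that reading your matching of sums is exact and the argument goes through verbatim. Alternatively, evaluate at $u_0=0$ as the paper does: there all $b_j$ vanish, the case $i=j=2$ trivializes, and your residue identity is needed only for $i=j=1$. Either repair is one line, but as written this step asserts an identification that fails for $k>l$.
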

\begin{proof}
Let us start by proving the first set of identities. It follows from Lemma \ref{temp5} that
\[
(\Ql-1)\Qo_{i,k;j,l} = (\Ql-1)\Qo_{j,l;i,k}.
\]
Therefore there exist constants $c_{i,k;j,l}$ such that
\[
\Qo_{i,k;j,l}-\Qo_{j,l;i,k}=c_{i,k;j,l}.
\]
To verify that the constants $c_{j,k;j,l}$ vanish, we may compute the limits of the differential polynomials $\Qo_{i,k;j,l}$ by setting $u = 0$ and $u^{(k)} = 0$. It is easy to see from Lemmas \ref{BI} and~\ref{BM} that the only non-trivial case is $i=j=1$. By using Lemma \ref{BI}, it follows from a straightforward computation that
\[
\Qo_{1,k;1,l}|_{u=u^{(i)} = 0} = p^{k+l}k^kl^l\sum_{n=1}^l \kk{\frac kl}^n\frac{n}{(l-n)!(k+n)!}.
\]
The summation on the right hand side of the above identity is evaluated using Gosper's algorithm~\cite{gosper1978decision}.\footnote{Here we use a Mathematica package provided on the website \url{https://www2.math.upenn.edu/~wilf/progs.html}.} Let us denote
\[
y_n = \kk{\frac kl}^n\frac{n}{(l-n)!(k+n)!},\qquad 1\leq n\leq l,
\]
and
\[
z_n = -\kk{\frac kl}^n\frac{l}{(k+l)(l-n)!(k+n-1)!},\qquad 1\leq n\leq l.
\]
Then it is straightforward to verify that
\[
z_{n+1}-z_n = y_n,\qquad 1\leq n\leq l-1,\qquad z_l = -y_l.
\]
Hence we conclude that
\[
\sum_{n=1}^l \kk{\frac kl}^n\frac{n}{(l-n)!(k+n)!} = -z_1 = \frac{1}{(k+l)(k-1)!(l-1)!}.
\]
So we see that $\Qo_{1,k;1,l}|_{u=u^{(i)} = 0}$ is symmetric with respect to the indices $k$ and $l$ and therefore we see that $\Qo_{1,k;1,l} = \Qo_{1,l;1,k}$.

On the other hand, it follows from Lemma \ref{temp5} that
\begin{equation}
\label{BS}
(\Ql-1)\diff{\Qo_{j,l;m,n}}{t_{i,k}} = (\Ql-1)\diff{\Qo_{i,k;m,n}}{t_{j,l}}.
\end{equation}
Since the differential polynomials $\diff{\Qo_{j,l;m,n}}{t_{i,k}}$ have differential degrees greater than $1$, from \eqref{BS} we arrive at the validity of the second set of identities of the lemma. The lemma is proved.
\end{proof}

\begin{Th}
For any solution $u(x;\textbf{t})$ of the LFV hierarchy, there exists a tau-function $\qt(x;\textbf{t})$ such that:
\begin{gather}\label{temp7}
(\Ql-1)\big(1-\Ql^{-1}\big)\log\qt = u,
\\
\qe(\Ql-1)\diff{\log\qt}{t_{1,k}} = \res L^k,
\\
\qe(\Ql-1)\diff{\log\qt}{t_{2,k}} = \res \overline L^k,
\\
\label{temp8}
\qe^2\frac{\qp^2\log\qt}{\qp t_{i,k}\qp t_{j,l}} = \Qo_{i,k;j,l}.
\end{gather}
\end{Th}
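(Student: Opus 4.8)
The plan is to regard \eqref{temp7}--\eqref{temp8} as an overdetermined linear system for the single scalar unknown $\log\qt$, living in the extension of $\rr(u)$ in which $\Ql-1$ may be formally inverted and in which one integrates along the flow variables $t_{i,k}$. For a fixed solution $u(x;\textbf{t})$ the quantities $\res L^k$, $\res\ol^k$ and $\Qo_{i,k;j,l}$ are honest functions of $(x,\textbf{t})$, so the problem reduces to checking that these relations are mutually compatible and then integrating them, with the compatibility furnished entirely by Lemmas~\ref{temp5} and~\ref{temp6}.

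First I would build the $\textbf{t}$-dependence of $\log\qt$. By the first identity of Lemma~\ref{temp6} the array $\{\Qo_{i,k;j,l}\}$ is symmetric in its two multi-indices, and by the second identity its $\textbf{t}$-derivatives are totally symmetric; hence $\{\Qo_{i,k;j,l}/\qe^2\}$ is a closed symmetric form and can be integrated twice in $\textbf{t}$ to produce a function whose second derivatives are exactly the $\Qo_{i,k;j,l}$, so that \eqref{temp8} holds. This determines $\log\qt$ up to an additive expression of the form $\sum_{i,k}c_{i,k}(x)\,t_{i,k}+c(x)$.

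Next I would fix the remaining freedom. Differentiating the target relation $\qe(\Ql-1)\diff{\log\qt}{t_{1,k}}=\res L^k$ in $t_{j,l}$ and invoking Lemma~\ref{temp5}, which gives $\qe\diff{}{t_{j,l}}\res L^k=(\Ql-1)\Qo_{j,l;1,k}$, shows that this relation is consistent with \eqref{temp8} up to the kernel of $\Ql-1$; the linear coefficients $c_{1,k}(x)$ are then pinned down so that the relation holds at a reference value of $\textbf{t}$, and similarly for the $\ol$-relation using the remaining identities of Lemma~\ref{temp5}. The last function $c(x)$ is fixed by solving the difference equation \eqref{temp7}, namely $(\Ql-1)\big(1-\Ql^{-1}\big)\log\qt=u$, in the shift-extension of $\rr(u)$. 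As a final consistency check, applying $(\Ql-1)\big(1-\Ql^{-1}\big)$ to the two first-derivative relations reproduces $\qe\diff{u}{t_{1,n}}=\big(1-\Ql^{-1}\big)\res L^n$ and $\qe\diff{u}{t_{2,n}}=\big(1-\Ql^{-1}\big)\res\ol^n$, the flows already derived in the Hamiltonian formulation, so \eqref{temp7} is compatible with the rest. Setting $\qt=\exp(\log\qt)$ then produces the tau-function.

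The main obstacle I anticipate is the bookkeeping of two independent integration ambiguities: the kernel of $\Ql-1$, which is the very reason $\log\qt$ must leave the ring $\rr(u)$, and the kernels of the $\textbf{t}$-derivatives. The delicate point is base-point consistency---showing that the first-derivative relations and the defining relation \eqref{temp7} can be satisfied simultaneously at one reference value of $\textbf{t}$, after which their equality propagates in $\textbf{t}$ by construction. This verification rests on the explicit descriptions of $L$ and $\ol$ from Lemmas~\ref{BI} and~\ref{BM}, and is carried out exactly as in~\cite{liu2018fractional}.
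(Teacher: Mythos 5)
Your proposal is correct and takes essentially the same approach as the paper: the paper's one-sentence proof asserts that the compatibility of \eqref{temp7}--\eqref{temp8} is furnished by the definition \eqref{def-LFVH} of the LFV hierarchy together with Lemmas~\ref{temp5} and~\ref{temp6}, and your three steps---integrating the second-derivative relations via Lemma~\ref{temp6}, fixing the first-derivative relations at a base point and propagating them via Lemma~\ref{temp5}, and reconciling \eqref{temp7} with the flows $\qe\,\partial u/\partial t_{1,n}=\big(1-\Ql^{-1}\big)\res L^n$ and $\qe\,\partial u/\partial t_{2,n}=\big(1-\Ql^{-1}\big)\res\ol^n$ coming from \eqref{def-LFVH}---spell out exactly that compatibility argument. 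The only quibble is notational: the function produced in your first step should have second derivatives $\Qo_{i,k;j,l}/\qe^2$ rather than $\Qo_{i,k;j,l}$, consistent with \eqref{temp8}.
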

\begin{proof}
The compatibility of the equations \eqref{temp7}--\eqref{temp8} is given by the definition of LFV hie\-rarchy \eqref{def-LFVH}, Lemmas \ref{temp5} and~\ref{temp6}. Hence such a tau-function exists and we prove the theorem.\looseness=1
\end{proof}

\section{The LFV hierarchy as a limit of the FV hierarchy}\label{sec3}
Let us explain how the LFV hierarchy can be viewed as a certain limit of the FV hierarchy given in \cite{liu2018fractional}. Let $p$, $q$, $r$ be any given non-zero complex numbers satisfying the condition $pq+qr+rp=0$. We introduce the shift operators
\[
\Ql_1=\Ql^{1/q},\qquad
\Ql_2= \Ql^{1/p},\qquad
\Ql_3=\Ql_1\Ql_2=\Ql^{-1/r},\qquad
\Ql = {\rm e}^{\qe\qp_x},
\]
and consider the following Lax operator
\[
L^{\rm FV} =\Ql_2+{\rm e}^v\Ql_1^{-1}
\]
with $v=v(x,\qe)$.
The fractional powers
\[
\mathcal A=\big(L^{\rm FV}\big)^{-p/r}=\Ql_3+\sum_{k\geq 0}f_k\Ql_3^{-k},\qquad
\mathcal B=\big(L^{\rm FV}\big)^{-q/r}=g_{-1}\Ql_3^{-1}+\sum_{k\geq 0}g_k\Ql_3^k
\]
are well-defined, and their coefficients $f_k$, $g_k$ belong to the ring $\rr(v)$.
It follows from the relations
\[
\big[\mathcal A,L^{\rm FV}\big]=\big[\mathcal B,L^{\rm FV}\big]=0
\]
that the coefficients $f_n$ and $g_n$ satisfy the following recursion relations:
\begin{gather}
\label{rec-FVH-1}
(\Ql_2-1)f_{k+1}=f_k\Ql_3^{-k}{\rm e}^v-{\rm e}^v\Ql_1^{-1}f_k,\qquad k\geq 0,\qquad f_0=\frac{\Ql_3-1}{\Ql_2-1}{\rm e}^v,
\\
\label{rec-FVH-2}
(\Ql_2-1)g_{k}=g_{k+1}\Ql_3^{k+1}{\rm e}^v-{\rm e}^v\Ql_1^{-1}g_{k+1},\qquad k\geq -1,\qquad
g_{-1}={\rm e}^{\frac{1-\Ql_3^{-1}}{1-\Ql_1^{-1}}v}.
\end{gather}

The fractional Volterra hierarchy is defined by the following Lax equations:
\begin{equation}
\label{def-FVH}
\qe\diff{L^{\rm FV}}{T_{1,n}} = \big[\mathcal A^n_+,L^{\rm FV}\big],\qquad
\qe\diff{L^{\rm FV}}{T_{2,n}} = -\big[\mathcal B^n_-,L^{\rm FV}\big].
\end{equation}
Here for an operator $\mathcal{C}$ of the form $\sum c_k \Lambda_3^k$, we denote
\[
\mathcal{C}_+=\sum_{k\ge 0} c_k \Lambda_3^k,\qquad
\mathcal{C}_-=\sum_{k<0} c_k \Lambda_3^k.
\]
In order to take a certain limit of the FV hierarchy, we first introduce a new dispersion parameter $\tilde\qe=\qe/q$ and the associated shift operator $\tilde \Ql = {\rm e}^{\tilde\qe\qp_x}$. We have the relations
\[
\Ql_1 = \tilde \Ql,\qquad \Ql_3 =\tilde \Ql^{\frac{p+q}{p}},
\]
and we can rewrite the recursion relations \eqref{rec-FVH-1} and \eqref{rec-FVH-2} as follows:
\begin{gather}
\label{rec-FVH-3}
\sum_{i\geq 1}\frac{\tilde \qe^i}{i!}\frac{q^{i-1}}{p^i}\qp_x^i\tilde f_{k+1}=\tilde f_k\tilde \Ql^{-\frac{k(p+q)}{p}}{\rm e}^v-{\rm e}^v\tilde\Ql^{-1}\tilde f_k,\qquad k\geq 0,
\\
\label{rec-FVH-4}
\sum_{i\geq 1}\frac{\tilde \qe^i}{i!}\frac{q^{i-1}}{p^i}\qp_x^i\tilde g_{k}=\tilde g_{k+1}\tilde\Ql^{\frac{(k+1)(p+q)}{p}}{\rm e}^v-{\rm e}^v\tilde\Ql^{-1}\tilde g_{k+1},\qquad k\geq -1,
\end{gather}
here
\[
\tilde f_k = q^{k+1}f_k,\qquad
\tilde g_k = \frac{g_k}{q^{k+1}}.
\]
Then it is easy to see that the recursion relations \eqref{rec-FVH-3} and \eqref{rec-FVH-4} become the relations \eqref{BT} and~\eqref{BU} after taking the limit $q\to 0$.

Let us look at the relations \eqref{rec-FVH-1} and \eqref{rec-FVH-3} more carefully. The coefficients $f_i$ of the operator~$\mathcal A$ can be uniquely determined from \eqref{rec-FVH-1} by requiring that
\[
\lim_{\qe\to 0}f_k = \frac{p^{k+1}{\rm e}^{kv}}{(k+1)!q^{k+1}}\prod_{i=1}^{k+1}\bigg(1+\frac{(i-k)q}{p}\bigg).
\]
Therefore if we assume the limit
\[
u(x,\tilde\qe) = \lim_{q\to 0}v(x,q\tilde\qe)
\]
exists, then it is easy to see from \eqref{rec-FVH-3} that the limits
\begin{equation}
\label{BW}
a_k(x,\qe) = \lim_{q\to 0}q^{k+1}f_k(x,q\tilde\qe)|_{\tilde\qe\mapsto\qe}
=\lim_{q\to 0}\tilde{f}_k(x,q\tilde\qe)|_{\tilde\qe\mapsto\qe},\qquad k\geq 0
\end{equation}
satisfy the relations \eqref{BL} and \eqref{BT}, and are exactly the coefficients of the operator $L$ described in Lemma \ref{BI}.

By using the above-mentioned observation, we can relate the flows $\diff{u}{t_{1,n}}$ of the LFV hierarchy to the flows $\diff{v}{T_{1,n}}$ of the FV hierarchy as follows. From the definition \eqref{def-LFVH} and \eqref{def-FVH}, we can write these flows as follows:
\[
\qe\diff{u}{t_{1,n}} = \big(1-\Ql^{-1}\big)\res L^n,\qquad
\qe\diff{v}{T_{1,n}} = \big(1-\Ql_1^{-1}\big)\res \mathcal A^n.
\]
By using the relations \eqref{BW}, one can prove that
\[
\res L^n = \lim_{q\to 0}\big(\!\res q^n\mathcal A^n|_{\qe\mapsto q\qe}\big).
\]
Thus we arrive at the relation
\[
\diff{u}{t_{1,n}} = \lim_{q\to 0}\bigg(q^{n+1}\diff{v}{T_{1,n}}\bigg|_{\qe\mapsto q\qe}\bigg).
\]
Moreover, the Hamiltonian operator of the FV hierarchy is given by
\[
\mathscr P^{\rm FV} = \frac{\big(1-\Ql_1^{-1}\big)(\Ql_3-1)}{\Ql_2-1},
\]
and it is easy to verify that the Hamiltonian operator of the LFV hierarchy can be obtained by taking the following limit:
\[
\mathscr P = \lim_{q\to 0}\left(q\mathscr P^{\rm FV}|_{\qe\to q\qe}\right).
\]

We note that the flows $\frac{\qp}{\qp t^{2,n}}$ and $\frac{\qp}{\qp T^{2,n}}$ are not related by means of taking the limit $q\to 0$ even though after taking such a limit the recursion relations~\eqref{rec-FVH-4} coincide with the rela\-ti\-ons~\eqref{BU}. Indeed, let us write down the flows
{\samepage\begin{gather*}
\diff{u}{T_{2,1}} = \frac{1}{p}u_x,
\\
\diff{v}{T_{2,1}} =\frac{\Ql_2-\Ql_1^{-1}}{\qe}\exp\bigg(\frac{1-\Ql^{-1}_2}{1-\Ql_1^{-1}}v\bigg) = \frac{p+q}{pq}{\rm e}^vv_x+O(\qe).
\end{gather*}}
We conclude that we cannot obtain the flow $\diff{u}{t_{2,1}}$ by taking the limit of the flow $\diff{v}{T_{2,1}}$.

The fact that the limits of the flows $\diff{v}{T_{2,n}}$ do not exist can also be obtained in view of the Hodge-FVH correspondence \cite{liu2021hodge}. It is proved in \cite{liu2021hodge} that the following function gives a solution of the FVH hierarchy:
\[
v(x;\textbf{T};\qe) = \big(\Ql_1^{1/2}-\Ql_1^{-1/2}\big)\big(\Ql_3^{1/2}-\Ql_3^{-1/2}\big)\mathcal H\bigg(\textbf{t}(x;\textbf{T});p,q,r;\frac{\sqrt{p+q}}{pq}\qe\bigg),
\]
where the variables $t_i$ and $T_{\qa,k}$ are related by
\[
t_i(x;\textbf{T}) = x\qd_{i,0}+\qd_{i,1}-1+\frac{1}{pq}\sum_{k>0}(kp)^{i+1} \binom{k(p+q)/q}{k}T_{1,k}+(kq)^{i+1}\binom{k(p+q)/p}{k}T_{2,k},
\]
and the generating function $\mathcal H\left(\textbf{t}(x;\textbf{T});p,q,r;\qe\right)$ of the special cubic Hodge integral is defined in \eqref{cubicHodge}. Therefore it follows from the relations between $t_i$ and $T_{\qa,k}$ that there does not exist a~way to rescale $T_{2,k}$ by multiplying a suitable power of $q$ such that after taking the limit $q\to 0$ of $t_i$ the variables $T_{2,k}$ are still preserved.

However, by setting $T_{2,k} = 0$ and by performing the change of variables $T_{1,k}\mapsto q^{k+1}t_{1,k}$ and~$\qe\mapsto q\qe$, we can obtain the following corollary which can be viewed as a limit of the Hodge-FVH correspondence.
\begin{Cor}
Let us denote by $\mathcal H(\textbf{t};p;\qe)$ the following generating function of the linear Hodge integrals:
\[
\mathcal H(\textbf{t};p;\qe)=\sum_{g\geq 0}\qe^{2g-2}\sum_{n\geq 0}\frac{t_{k_1}\cdots t_{k_n}}{n!}\int_{\mgn}\psi_1^{k_1}\cdots \psi_n^{k_n}\mathcal C_g(-p)
\]
and denote by $u(x;t_{1,k};\qe)$ the function
\[
u(x;t_{1,k};\qe) = \big(\Ql^{1/2}-\Ql^{-1/2}\big)^2\mathcal H(\textbf{t}(x;t_{1,k});p;\qe),
\]
where the relations between the variables $t_i$ and $t_{1,k}$ are given by
\[
t_i = \sum_{k\geq 0}\frac{k^{i+1+k}}{k!}p^{k+i}t_{1,k}-1+x\qd_{i,0}+\qd_{i,1}.
\]
Then the function $u(x;t_{1,k};\qe)$ satisfies the equations which form a part of the LFV hierarchy:
\[
\qe\diff{K}{t_{1,n}} = [(L^n)_+,K],
\]
where the differential-difference operator $K$ and the difference operator $L$ are given by
\[
K=\frac 1p\qe\qp_x+{\rm e}^u\Ql^{-1},\qquad \log L = pK.
\]
\end{Cor}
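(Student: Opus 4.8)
The plan is to read the statement off as the $q\to0$ limit of the Hodge--FVH correspondence recalled just above. Its starting point is the fact, proved in \cite{liu2021hodge}, that the function $v(x;\textbf{T};\qe)$ built from the special cubic Hodge potential solves the FV hierarchy. I would impose $T_{2,k}=0$, substitute $T_{1,k}\mapsto q^{k+1}t_{1,k}$ and $\qe\mapsto q\qe$, and then let $q\to0$. The claim to establish is that under this procedure $v$ converges to the function $u(x;t_{1,k};\qe)$ of the corollary and that the $T_{1,n}$-flows of the FV hierarchy converge to the $t_{1,n}$-flows of the LFV hierarchy, so that the limit $u$ satisfies $\qe\diff{K}{t_{1,n}}=[(L^n)_+,K]$.

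Three limits then have to be checked. First, on the enumerative side, the Calabi--Yau condition gives $r=-pq/(p+q)\to0$ as $q\to0$, and since $\mathcal C_g(0)=1$ we have $\mathcal C_g(-q),\mathcal C_g(-r)\to1$; hence, term by term in the genus expansion, the cubic Hodge potential $\mathcal H(\textbf{t};p,q,r;\cdot)$ degenerates to the linear Hodge potential $\mathcal H(\textbf{t};p;\cdot)$, once the dispersion argument $\frac{\sqrt{p+q}}{pq}\qe$ is tracked through $\qe\mapsto q\qe$. Second, on the operator side, after $\qe\mapsto q\qe$ one has $\Ql_1=\Ql^{1/q}\to\Ql$ and $\Ql_3=\Ql^{(p+q)/(pq)}\to\Ql$, so the prefactor $\kk{\Ql_1^{1/2}-\Ql_1^{-1/2}}\kk{\Ql_3^{1/2}-\Ql_3^{-1/2}}$ tends to $\kk{\Ql^{1/2}-\Ql^{-1/2}}^2$, which is exactly the operator defining $u$. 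Third, and this is the one genuinely new computation, I would take the limit of the time dictionary: inserting $T_{2,k}=0$ and $T_{1,k}=q^{k+1}t_{1,k}$ into the expression for $t_i(x;\textbf{T})$ and using $\binom{k(p+q)/q}{k}\sim\frac{(kp/q)^k}{k!}$ as $q\to0$, the coefficient $\frac1{pq}(kp)^{i+1}\binom{k(p+q)/q}{k}q^{k+1}$ converges to $\frac{k^{i+1+k}}{k!}p^{k+i}$. Since the $k=0$ term vanishes (because $0^{i+1}=0$), this reproduces the claimed relation $t_i=\sum_{k\geq0}\frac{k^{i+1+k}}{k!}p^{k+i}t_{1,k}-1+x\qd_{i,0}+\qd_{i,1}$.

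To transfer the flow I would reuse the two limit identities already obtained in this section: the coefficient limit \eqref{BW}, which yields $\res L^n=\lim_{q\to0}\kk{\res q^n\mathcal A^n|_{\qe\mapsto q\qe}}$, and the chain-rule relation $\diff{u}{t_{1,n}}=\lim_{q\to0}\kk{q^{n+1}\diff{v}{T_{1,n}}\big|_{\qe\mapsto q\qe}}$ coming from $T_{1,n}=q^{n+1}t_{1,n}$. Combining these with $\qe\diff{v}{T_{1,n}}=\kk{1-\Ql_1^{-1}}\res\mathcal A^n$ and $\Ql_1\to\Ql$ gives $\qe\diff{u}{t_{1,n}}=\kk{1-\Ql^{-1}}\res L^n$, which by Lemma~\ref{BI} and the definition of the LFV flows is precisely the Lax form $\qe\diff{K}{t_{1,n}}=[(L^n)_+,K]$.

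The main obstacle is the analytic justification rather than the algebra: one must show that $u=\lim_{q\to0}v$ exists coefficient by coefficient as a formal series in $\qe$ and the $t_{1,k}$, and, most importantly, that $\lim_{q\to0}$ commutes with the flow derivative $\diff{}{t_{1,n}}$, so that a solution of the FV hierarchy is sent to a genuine solution of the LFV hierarchy. This rests on uniform control of the binomial asymptotics over all $k$ and on the convergence, secured earlier in this section, of the rescaled coefficients $\tilde f_k(x,q\qe)$ to the differential polynomials $a_k$ of Lemma~\ref{BI} satisfying \eqref{BL} and \eqref{BT}; granting this, the remaining steps are the bookkeeping sketched above.
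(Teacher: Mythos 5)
Your proposal is correct and follows essentially the same route as the paper: the paper offers no separate proof of this corollary, presenting it as the direct consequence of setting $T_{2,k}=0$, rescaling $T_{1,k}\mapsto q^{k+1}t_{1,k}$, $\qe\mapsto q\qe$ in the Hodge--FVH correspondence of \cite{liu2021hodge}, and invoking the limit identities \eqref{BW}, $\res L^n=\lim_{q\to 0}\big(\!\res q^n\mathcal A^n|_{\qe\mapsto q\qe}\big)$ and $\diff{u}{t_{1,n}}=\lim_{q\to 0}\big(q^{n+1}\diff{v}{T_{1,n}}|_{\qe\mapsto q\qe}\big)$ established earlier in Section~\ref{sec3}. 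Your verification of the binomial asymptotics for the time dictionary and your remark on commuting the limit with the flow derivatives are details the paper leaves implicit, so your write-up is, if anything, more complete than the paper's.
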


\section{Relation between the ILW hierarchy and the LFV hierarchy}\label{sec4}
Due to the discussion given in the last section, we expect that the LFV hierarchy should control a certain generating function of the linear Hodge integrals over the moduli space of stable curves. It is proved in \cite{buryak2015dubrovin} that the integrable hierarchy corresponds to the linear Hodge integral is the intermediate-long wave (ILW) hierarchy. Therefore it is natural to establish relations between the LFV hierarchy and the ILW hierarchy.

We start by reviewing the Lax pair formalism of the ILW hierarchy given in \cite{buryak2018simple}. Compared to the original convention used in \cite{buryak2018simple}, we do some rescalings for the convenience of our presentation given below. Consider the following operators $\mathcal K$ and $\mathcal L$ defined by
\begin{equation}\label{BR-Lax}
\mathcal L-\frac 1p\log\mathcal L = \mathcal K,\qquad
\mathcal K = \Ql+w-\frac 1p\qe\qp_x,
\end{equation}
where the operator $\mathcal L$ can be written as
\[
\mathcal L=\Ql+w+c_1\Ql^{-1}+c_2\Ql^{-2}+\cdots,\qquad
c_k\in\mathcal R(w).
\]
Similar as before, we denote by $\mathcal R(w)$ the ring of differential polynomials of $w$. The logarithm of the operator $\mathcal L$ is defined, similar to $\log L$, via the dressing operator
as follows:
\begin{equation*}
\mathcal L = \mathcal P\Ql \mathcal P^{-1},\qquad
\log\mathcal L = \qe\qp_x-\qe \mathcal P_x\mathcal P^{-1}.
\end{equation*}
Using the operators defined above, the ILW hierarchy can be represented as
\begin{equation}\label{def1-ILW}
\qe\diff{\mathcal K}{s_{1,n}} = [\mathcal L^n_+,\mathcal K],\qquad n\ge 1.
\end{equation}
The differential polynomials $c_k$ are uniquely determined by the recursion
relations	
\begin{equation}
\label{BV}
c_k\big(1-\Ql^{-k}\big)w+(\Ql-1)c_{k+1} = \frac 1p \qe\qp_x c_k,\qquad k\geq 0,\qquad c_0 = w,
\end{equation}
and by the condition that the dispersionless limits of $c_k$ are given by $\lim_{\qe\to 0}c_k = \qg_k$, where $\qg_k$ are polynomials given by
\[
\qg_{k+1} = \frac 1p\qg_k-\int_0^w\qg_k\,{\rm d}w,\qquad k\geq 0,\qquad \qg_0 = w.
\]

The following theorem gives a direct relation between the ILW hierarchy and the LFV hierarchy.
\begin{Th}\label{temp15}
The operator $\overline L$ defined in Lemma $\ref{BM}$ and the operator $\mathcal L$ defined in \eqref{BR-Lax} are related via the following relation:
\begin{equation}
\label{rel-1}
\Ql^{1/2}\res\mathcal L^k = \res \overline L^k,\qquad k\geq 1,
\end{equation}
where the functions $u(x)$ and $w(x)$ in the above identity are identified by the Miura-type transformation
\begin{equation}
\label{Miura}
u(x) = p\frac{\Ql^{1/2}-\Ql^{-1/2}}{\qe\qp_x}w(x).
\end{equation}
\end{Th}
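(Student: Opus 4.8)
The plan is to connect the two Lax pairs $\overline L$ and $\mathcal L$ through their common role as ``dressed'' versions of the first-order differential-difference operators $K$ and $\mathcal K$, and then to show that the Miura transformation \eqref{Miura} is precisely the map that makes these two operators agree. The key observation is that both $\overline L$ and $\mathcal L$ are determined by a relation of the same structural form, namely $\overline L - \frac1p\log\overline L = K$ (Lemma~\ref{BM}) and $\mathcal L - \frac1p\log\mathcal L = \mathcal K$ (equation~\eqref{BR-Lax}). Since the operators $K = \frac1p\qe\qp_x + {\rm e}^u\Ql^{-1}$ and $\mathcal K = \Ql + w - \frac1p\qe\qp_x$ both generate their respective Lax operators uniquely (via the defining recursions), it suffices to exhibit a single invertible operator conjugating one setup into the other, or more directly to show that under \eqref{Miura} the two sides of \eqref{rel-1} satisfy the same recursion and the same dispersionless initial data.

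First I would introduce the intertwining operator. The appearance of $\Ql^{1/2}$ in \eqref{rel-1} and of $\Ql^{1/2}-\Ql^{-1/2}$ in \eqref{Miura} strongly suggests conjugating by $\Ql^{1/2}$ (equivalently by a half-shift) so as to symmetrize the first-order operators. Concretely, I would compute $\Ql^{1/2}\mathcal K\Ql^{-1/2}$ and compare it with a suitably gauge-transformed version of $\overline L$ or $K$; the relation \eqref{Miura} is exactly the condition under which the potential terms match, because applying $\frac{\Ql^{1/2}-\Ql^{-1/2}}{\qe\qp_x}$ converts the symmetric difference of $w$ into the potential $u$ appearing in ${\rm e}^u$. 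Once the two first-order operators are identified after this conjugation, the relation $\Ql^{1/2}\mathcal L^k\Ql^{-1/2}$ versus $\overline L^k$ follows because the $k$-th power is built uniquely from the same data; taking residues and tracking the half-shift $\Ql^{1/2}$ then yields \eqref{rel-1}.

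Alternatively — and this is probably the cleaner route to write down — I would verify \eqref{rel-1} by induction on $k$ using the recursion relations directly. From the defining recursion \eqref{BU} for the coefficients $b_k$ of $\overline L$ and the recursion \eqref{BV} for the coefficients $c_k$ of $\mathcal L$, both of which encode the commutativity $[\overline L, pK]=0$ and $[\mathcal L, p\mathcal K]=0$ respectively, one obtains matching recursions for $\res\overline L^k$ and $\res\mathcal L^k$ after substituting \eqref{Miura}. The base case $k=1$ amounts to checking that $\Ql^{1/2}\res\mathcal L = \res\overline L$, i.e.\ that $\Ql^{1/2}w$ corresponds to $b_0$ under \eqref{Miura}, which follows by inverting the operator $\frac{\Ql^{1/2}-\Ql^{-1/2}}{\qe\qp_x}$ against ${\rm e}^u$ at leading order. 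The inductive step uses that both residues increment via the same shift-operator identity, with the half-shift $\Ql^{1/2}$ absorbing the asymmetry between the $\Ql^{-1}$-normalization of $\overline L$ and the $\Ql^{+1}$-normalization of $\mathcal L$.

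The main obstacle I anticipate is controlling the logarithmic terms consistently across the Miura transformation. The operators $\log\overline L$ and $\log\mathcal L$ are defined through dressing operators whose coefficients live only in an extension of $\rr(u)$, so the identification cannot be carried out naively term-by-term on the dressing level; I would instead phrase everything through the residue identities and the $\rr(u)$-valued combinations $\qe Q_xQ^{-1}$ and $\qe\mathcal P_x\mathcal P^{-1}$, which are well-defined differential-difference operators. The delicate point is verifying that the half-shift conjugation respects the splitting into positive and negative parts compatibly with how $\log\overline L$ and $\log\mathcal L$ enter the two defining relations; I would handle this by checking that the Miura map \eqref{Miura} intertwines $K$ with the $\Ql^{1/2}$-conjugate of $\mathcal K$ at the level of the full operators (not just their residues), after which uniqueness of the dressing forces \eqref{rel-1}.
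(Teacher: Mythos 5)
Your toolkit---half-shift conjugation, a gauge transformation, matching of potentials under \eqref{Miura}, and uniqueness fixed by dispersionless data---contains most of the ingredients of the paper's proof, but there is a genuine gap at precisely the step you declare to be automatic: no conjugation can identify the two first-order operators, because they sit on opposite sides of an \emph{anti}-automorphism. Concretely, the right gauge function is $q_0$, the residue of the dressing operator $Q$ of $\ol$; it satisfies $q_0={\rm e}^u\Ql^{-1}q_0$, hence $\qe\qp_x\log q_0=\frac{\qe\qp_x}{1-\Ql^{-1}}u$, and one computes, using \eqref{Miura} and $\Ql^{1/2}-\Ql^{-1/2}=\Ql^{1/2}\big(1-\Ql^{-1}\big)$,
\[
q_0^{-1}Kq_0=\Ql^{-1}+\frac 1p\frac{\qe\qp_x}{1-\Ql^{-1}}u+\frac 1p\qe\qp_x
=\Ql^{-1}+\Ql^{1/2}w+\frac 1p\qe\qp_x,
\qquad
\Ql^{1/2}\mathcal K\Ql^{-1/2}=\Ql+\Ql^{1/2}w-\frac 1p\qe\qp_x.
\]
These two operators are \emph{not} conjugate: conjugation by a function or by a power of $\Ql$ preserves both the sign of the $\qe\qp_x$-term and the direction of the shifts, whereas here $+\frac 1p\qe\qp_x$ and $\Ql^{-1}$ would have to be exchanged with $-\frac 1p\qe\qp_x$ and $\Ql$. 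They are adjoints of each other, where the adjoint is the anti-automorphism $\big(\sum_kg_k\Ql^k\big)^*=\sum_k\big(\Ql^{-k}g_k\big)\Ql^{-k}$, extended by $(\qe\qp_x)^*=-\qe\qp_x$. For the same reason $\ol^k$ (supported on powers $\Ql^j$ with $j\geq -k$) can never equal $\Ql^{1/2}\mathcal L^k\Ql^{-1/2}$ (supported on $\Ql^j$ with $j\leq k$), so your claim that ``the $k$-th power is built uniquely from the same data'' cannot be repaired without the adjoint. What is true, and what the paper proves, is the operator identity $\Ql^{1/2}\mathcal L\Ql^{-1/2}=\overline{\mathscr L}^{\,*}$ with $\overline{\mathscr L}=q_0^{-1}\ol q_0$: the gauge-transformed constraint \eqref{temp11} is rewritten as the recursion \eqref{temp12} for the coefficients of $\overline{\mathscr L}$, one passes to the coefficients of the adjoint to get \eqref{temp13}, and compares with the ILW recursion \eqref{BV}, with dispersionless limits fixing the integration constants. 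The identity \eqref{rel-1} then follows because $\res$ is invariant under both the gauge transformation and the adjoint, and $\big(\overline{\mathscr L}^{\,*}\big)^k=\big(\overline{\mathscr L}^{\,k}\big)^*$.

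Your fallback induction on $k$ at the level of residues does not close either: $\res\ol^{k+1}$ is not determined by $\res\ol^k$ alone (it involves all coefficients of $\ol^k$), so there is no self-contained ``shift-operator identity'' incrementing residues; moreover the recursions \eqref{BU} and \eqref{BV} govern the coefficients $b_k$, $c_k$ of the operators themselves, not the residues of their powers, and they do not match under \eqref{Miura} until after the gauge transformation and the adjoint have been applied. Your base case is correct ($b_0=\frac 1p\frac{\qe\qp_x}{1-\Ql^{-1}}u=\Ql^{1/2}w$), and your instinct that uniqueness fixed by dispersionless limits should finish the argument is also correct---but both must be applied to the full coefficient sequences of $\overline{\mathscr L}^{\,*}$ and $\Ql^{1/2}\mathcal L\Ql^{-1/2}$, i.e., to the operator identity above, after which all the identities \eqref{rel-1} follow at once by taking residues of powers.
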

\begin{proof}
We start with considering the dressing operator $Q$ of the operator $\ol$ described in Lemma~\ref{BM}. From the definition of $Q$ we see that its residue $q_0$ satisfies the following equation:
\begin{equation}\label{temp10}
q_0 = {\rm e}^u\Ql^{-1}q_0.
\end{equation}
Taking derivatives of both sides of the above equation with respect to $x$, we arrive at
\begin{equation}\label{temp9}
\frac{\qp_xq_0}{q_0} = \frac{\qp_x}{1-\Ql^{-1}}u.
\end{equation}

Now let us perform a gauge transformation on $\overline L$ by the adjoint action of $q_0$ to obtain
\[
\overline{\mathscr L} = q_0^{-1}\overline L q_0 = \mathscr Q\Ql^{-1}\mathscr Q^{-1},\qquad
\mathscr Q = q_0^{-1}Q.
\]
By using the relation
\[
\ol-\frac 1p\log\ol = K
\]
together with \eqref{temp10} and \eqref{temp9}, we can check the validity of the following relation:
\begin{equation}\label{temp11}
\overline{\mathscr L}-\frac 1p \log\overline{\mathscr L} = \Ql^{-1}+\frac 1p \frac{\qe\qp_x}{1-\Ql^{-1}}u+\frac 1p \qe\qp_x,
\end{equation}
here the logarithm of $\overline{\mathscr L}$ is defined by $\mathscr Q$ as follows:
\[
\log\overline{\mathscr L} = -\qe\qp_x+\qe\mathscr Q_x\mathscr Q^{-1}.
\]
Denote $\overline{\mathscr L} = \Ql^{-1}+\sum_{k\geq 0} f_k\Ql^k$, then it follows from the identity
\[
\biggl[\overline{\mathscr L},\Ql^{-1}+\frac 1p \frac{\qe\qp_x}{1-\Ql^{-1}}u+\frac 1p \qe\qp_x\biggr] = 0
\]
that we can rewrite \eqref{temp11} as the following recursion relations for the coefficients $f_k$:
\begin{equation}\label{temp12}
f_k\big(\Ql^k-1\big)\Ql^{1/2}w+\big(1-\Ql^{-1}\big)f_{k+1} = \frac 1p \qe\qp_x f_k,\qquad
k\geq 0,\qquad
f_0 = \Ql^{1/2}w.
\end{equation}
Here we have already identified $u(x)$ and $w(x)$ via the relation \eqref{Miura}. The dispersionless limits of $f_k$ can be computed by using Lemma \ref{BM} and we obtain that $\lim_{\qe\to 0}f_k = \qd_k$, where $\qd_k$ are polynomials given by
\[
\qd_{k+1} = \frac 1p\qd_k-\int_0^wk\qd_k\,{\rm d}w,\qquad k\geq 0,\qquad \qd_0 = w.
\]

Recall that for a difference operator $\sum_{k}g_k\Ql^k$, its adjoint is defined to be $\big(\sum_{k}g_k\Ql^k\big)^* = \sum_k\big(\Ql^{-k}g_k\big)\Ql^{-k}$. Therefore if we denote $\overline{\mathscr L}^* = \Ql+\sum_{k\geq 0} f_k^*\Ql^{-k}$, we easily obtain from \eqref{temp12} the following recursion relations:
\begin{equation}\label{temp13}
f_k^*\big(1-\Ql^{-k}\big)\Ql^{1/2}w+(\Ql-1)f_{k+1} = \frac 1p \qe\qp_x f_k^*,\qquad
k\geq 0,
\qquad f_0^* = \Ql^{1/2}w.
\end{equation}
Finally by comparing \eqref{temp13} with \eqref{BV}, we arrive at
\[
\Ql^{1/2}\mathcal L\Ql^{-1/2} = \overline{\mathscr L}^*,
\]
which implies \eqref{rel-1}. The theorem is proved.
\end{proof}

The following corollary is straightforward and gives the exact correspondence between the flows of the LFV hierarchy and the ILW hierarchy:
\begin{Cor}\label{temp16}
The flows $\diff{w}{s_{1,n}}$ of the ILW hierarchy defined in \eqref{def1-ILW} coincide with the flows $\diff{u}{t_{2,n}}$ of the LFV hierarchy given in \eqref{def-LFVH} after identifying $w(x)$ and $u(x)$ by the Miura-type transformation \eqref{Miura}.
\end{Cor}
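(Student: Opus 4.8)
The plan is to rewrite both families of flows in a conservation-law form involving only residues of powers of the respective Lax operators, and then to match them using the residue identity \eqref{rel-1} of Theorem~\ref{temp15}. On the LFV side, as recorded in the proof of the Hamiltonian representation, the flow can be written as $\qe\diff{u}{t_{2,n}} = \big(1-\Ql^{-1}\big)\res\ol^n$. So the only real task is to produce an analogous residue expression for the ILW flow $\diff{w}{s_{1,n}}$ and then transport it through the Miura-type transformation \eqref{Miura}.

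To obtain the residue form of the ILW flow, I would extract the $\Ql^0$-component of the Lax equation \eqref{def1-ILW}. Writing $\mathcal L^n = \sum_k e_{k,n}\Ql^k$ and using $[\mathcal L^n_+,\mathcal K] = -[\mathcal L^n_-,\mathcal K]$, one checks that among the three pieces of $\mathcal K = \Ql+w-\frac1p\qe\qp_x$ only the shift $\Ql$ produces a $\Ql^0$-term when commuted with the purely negative operator $\mathcal L^n_-$, and only the coefficient $e_{-1,n}$ survives; this yields $\qe\diff{w}{s_{1,n}} = (\Ql-1)e_{-1,n}$. This still involves the off-diagonal coefficient $e_{-1,n}$ rather than the residue $e_{0,n}=\res\mathcal L^n$, and I would eliminate it using the commutation $[\mathcal L^n,\mathcal K]=0$: its $\Ql^0$-component reads $(1-\Ql)e_{-1,n} + \frac1p\qe\qp_x e_{0,n} = 0$, so that $(\Ql-1)e_{-1,n} = \frac1p\qe\qp_x\res\mathcal L^n$ and hence $\diff{w}{s_{1,n}} = \frac1p\qp_x\res\mathcal L^n$.

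Finally I would push this through the transformation \eqref{Miura}. Since \eqref{Miura} is time-independent, applying $p\frac{\Ql^{1/2}-\Ql^{-1/2}}{\qe\qp_x}$ to the $w$-flow gives $\qe\diff{u}{s_{1,n}} = \big(\Ql^{1/2}-\Ql^{-1/2}\big)\res\mathcal L^n$. Now the residue identity \eqref{rel-1} supplies $\res\mathcal L^n = \Ql^{-1/2}\res\ol^n$, and substituting collapses the factor to $\big(\Ql^{1/2}-\Ql^{-1/2}\big)\Ql^{-1/2}\res\ol^n = \big(1-\Ql^{-1}\big)\res\ol^n$, which is precisely $\qe\diff{u}{t_{2,n}}$. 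Thus $\diff{u}{s_{1,n}} = \diff{u}{t_{2,n}}$, establishing the claimed coincidence of flows.

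The genuine content of the corollary is carried entirely by Theorem~\ref{temp15}; once the residue identity is available, the remainder is bookkeeping. The step demanding the most care is the passage from the Lax form of the ILW flow to its residue form: one must correctly identify which coefficient of $\mathcal L^n$ contributes to the $\Ql^0$-component of the commutator, and then invoke $[\mathcal L^n,\mathcal K]=0$ to trade the off-diagonal $e_{-1,n}$ for $\qp_x\res\mathcal L^n$. Tracking the half-shifts $\Ql^{\pm 1/2}$ correctly between the two operators is the only other place where a shift discrepancy could arise.
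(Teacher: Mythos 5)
Your proposal is correct and follows essentially the same route as the paper's proof: write both flows in residue form, namely $\qe\diff{w}{s_{1,n}} = \frac 1p \qe\qp_x\res\mathcal L^n$ and $\qe\diff{u}{t_{2,n}} = \big(1-\Ql^{-1}\big)\res\ol^n$, push the former through the Miura-type transformation \eqref{Miura}, and conclude via the residue identity \eqref{rel-1}. The only difference is that you derive the ILW residue form explicitly from the $\Ql^0$-component of the Lax equation together with $[\mathcal L^n,\mathcal K]=0$, a step the paper treats as immediate from the definition \eqref{def1-ILW}; your derivation of it is correct.
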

\begin{proof}From the definition \eqref{def1-ILW} of the ILW flows it follows that
\[\qe\diff{w}{s_{1,n}} = \frac 1p \qe\qp_x\res\mathcal L^n.\]
By using the Miura-type transformation \eqref{Miura} we arrive at
\[\qe\diff{u}{s_{1,n}} = \big(\Ql^{1/2}-\Ql^{-1/2}\big) \res\mathcal L^n.\]
On the other hand, from the definition \eqref{def-LFVH} of the LFV flows we know that
\[\qe\diff{u}{t_{2,n}} = \big(1-\Ql^{-1}\big) \res\overline L^k,\]
hence we complete the proof of the corollary by using the identity \eqref{rel-1}.
\end{proof}

From Corollary \ref{temp16} we conclude that after the Miura-type transformation \eqref{Miura}, the flows $\diff{u}{t_{1,n}}$ of the LFV hierarchy are transformed to symmetries of the ILW hierarchy, which has a Lax pair description given by the following constructions. The proofs of what follows are similar to the ones for the LFV hierarchy and we omit the details.
\begin{Lem}\label{zh-1-9}
There exists a difference operator
\begin{equation*}
\overline{\mathcal L} = d_{-1}\Ql^{-1}+\sum_{k\geq 0}d_k\Ql^k = \mathcal Q\Ql^{-1}\mathcal Q^{-1},\qquad d_k\in\mathcal R(w),
\end{equation*}
such that the following relation holds true
\begin{equation}
\label{BR-Lax2}
\frac 1p \log \overline{\mathcal L} = \mathcal K,
\end{equation}
here $\mathcal K$ is defined in \eqref{BR-Lax} and the logarithm of $\overline{\mathcal L}$ is defined by
\[\log\overline{\mathcal L} = -\qe\qp_x+\qe\mathcal Q_x\mathcal Q^{-1}.	\]
\end{Lem}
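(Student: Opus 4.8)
The plan is to combine the two arguments already used for the LFV hierarchy. The relation \eqref{BR-Lax2}, $\frac1p\log\overline{\mathcal L}=\mathcal K$, is the ILW-analogue of the relation $\frac1p\log L=K$ established in Lemma~\ref{BI}, while the operator $\overline{\mathcal L}=\mathcal Q\Ql^{-1}\mathcal Q^{-1}$ is dressed out of $\Ql^{-1}$ exactly as in Lemma~\ref{BM}. Writing $p\mathcal K=p\Ql+pw-\qe\qp_x$, I would construct the coefficients $d_k$ of $\overline{\mathcal L}=d_{-1}\Ql^{-1}+\sum_{k\geq 0}d_k\Ql^k$ as the unique differential polynomials in $\rr(w)$ satisfying the residue identities
\[
\res\fk{\overline{\mathcal L}^n}{p\Ql}=-\qe\qp_x\res\overline{\mathcal L}^n,\qquad n\geq 1,
\]
which are the analogues of the defining equations \eqref{BJ}. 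Here the constant term $pw$ of $p\mathcal K$ contributes nothing to the residue, while the derivation term $-\frac1p\qe\qp_x$ produces the right-hand side through $\res\fk{\overline{\mathcal L}^n}{-\qe\qp_x}=\qe\qp_x\res\overline{\mathcal L}^n$.

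First I would check that these identities determine the $d_k$ recursively. Fixing the normalization $d_0=w$, the identity for $n=1$ is a first-order difference equation solved by $d_{-1}=\frac1p\frac{\qe\qp_x}{\Ql-1}w$, and for each $n\geq 2$ the $n$-th identity determines $d_{n-1}$ in terms of the previously found $d_{-1},d_0,\dots,d_{n-2}$ after inverting $\qe\qp_x$ with vanishing integration constant; this is completely parallel to the way \eqref{BJ} fixes the coefficients $a_k$ in Lemma~\ref{BI}. A parallel induction then shows that the dispersionless limits $\lim_{\qe\to 0}d_k$ are polynomials in $w$ governed by a recursion with $d_0=w$, in the same spirit as the polynomials $\qb_k$, $\qg_k$ and $\qd_k$ appearing in Lemma~\ref{BM}, in the recursion \eqref{BV} and in Theorem~\ref{temp15}; this guarantees $d_k\in\rr(w)$ and the existence of the hydrodynamic limit.

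Next I would pass from the residue identities to the operator identity. Invoking the results of \cite{carlet2004extended} exactly as in Lemma~\ref{BI}, the identities above force the dressing operator $\mathcal Q$ with $\overline{\mathcal L}=\mathcal Q\Ql^{-1}\mathcal Q^{-1}$ to satisfy $\qe\mathcal Q_x\mathcal Q^{-1}=p\Ql+pw$; since $\log\overline{\mathcal L}=-\qe\qp_x+\qe\mathcal Q_x\mathcal Q^{-1}$, this yields $\log\overline{\mathcal L}=-\qe\qp_x+p\Ql+pw=p\mathcal K$, which is precisely \eqref{BR-Lax2}. Finally, reading $\fk{\overline{\mathcal L}}{p\mathcal K}=0$ off the established relation produces the explicit recursion for the $d_k$ and, together with the value of $d_{-1}$, confirms the claimed dispersionless limits.

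The step I expect to be the main obstacle is the constant term $pw$ in the operator identity $\qe\mathcal Q_x\mathcal Q^{-1}=p\Ql+pw$. In Lemma~\ref{BI} no such issue arose, because the difference part $e^u\Ql^{-1}$ of $K$ carries no $\Ql^0$-component and $\qe P_xP^{-1}$ is purely a single negative power; here, by contrast, the difference part $\Ql+w$ of $\mathcal K$ contains the constant $w$, which is invisible at the level of the residue identities. One must therefore verify separately that the $\Ql^0$-component of $\qe\mathcal Q_x\mathcal Q^{-1}$ equals $pw$, a condition amounting to $\qe\qp_x\log q_0=pw$ that has to be matched against the normalization $d_0=w$. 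Keeping track of this constant, together with the sign of the derivation term $-\frac1p\qe\qp_x$ in $\mathcal K$ that forces $\overline{\mathcal L}$ to be dominant in the negative powers of $\Ql$, is the delicate point distinguishing this proof from that of Lemma~\ref{BI}.
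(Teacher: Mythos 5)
Your overall template---defining $\overline{\mathcal L}$ through the residue identities $\res\fk{\overline{\mathcal L}^n}{p\Ql}=-\qe\qp_x\res\overline{\mathcal L}^n$, invoking \cite{carlet2004extended} to upgrade them to an operator identity, and reading the dispersionless limits off the commutation relation---is exactly the one the paper intends (it omits this proof as ``similar to the ones for the LFV hierarchy''). However, your normalization contains a genuine error, not merely the ``delicate point'' you flag at the end. The relation \eqref{BR-Lax2} implies $\fk{\overline{\mathcal L}}{p\mathcal K}=0$, and comparing the coefficients of $\Ql^k$ in this commutator gives the recursion
\begin{gather*}
p(\Ql-1)d_{k-1}=p\,d_k\big(\Ql^k-1\big)w+\qe\qp_x d_k,\qquad k\geq -1,\qquad d_{-2}:=0.
\end{gather*}
The $k=-1$ component forces $\qe\qp_x\log d_{-1}=p\big(1-\Ql^{-1}\big)w$, i.e.,
\begin{gather*}
d_{-1}=\exp\bigg(p\frac{1-\Ql^{-1}}{\qe\qp_x}w\bigg),\qquad
d_0=p\frac{\Ql-1}{\qe\qp_x}d_{-1},
\end{gather*}
whose dispersionless limits are ${\rm e}^{pw}$ and $p{\rm e}^{pw}$. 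So $d_0\neq w$ and $d_{-1}\neq\frac1p\frac{\qe\qp_x}{\Ql-1}w$: your seed does solve the $n=1$ residue identity, but it is incompatible with \eqref{BR-Lax2} (already at order $\qe$ the $\Ql^{-1}$-component of $\fk{\overline{\mathcal L}}{p\mathcal K}$ is then nonzero). The paper itself confirms the exponential seed: the Example after the lemma displays the first flow $\qe\diff{w}{s_{2,1}}=(\Ql-1)\exp\big(p\frac{1-\Ql^{-1}}{\qe\qp_x}w\big)$, which is precisely $(\Ql-1)d_{-1}$ since $\overline{\mathcal L}_-=d_{-1}\Ql^{-1}$; and the Theorem following the lemma gives $\res\overline{\mathcal L}=\Ql^{-1/2}\res L=\Ql^{-1/2}a_0$, again an exponential of $w$, not $w$ itself. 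For the same reason the dispersionless limits of the $d_k$ are not polynomials in $w$; as for $\ol$ in Lemma~\ref{BM} and for $\mathcal B$ in the FV hierarchy, they carry exponential prefactors.

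The root cause is the one you correctly diagnose: the residue identities are blind to the $\Ql^0$-part of $\qe\mathcal Q_x\mathcal Q^{-1}$, so they admit a whole family of solutions, one for each normalization, and only one member of the family satisfies $\qe\qp_x\log q_0=pw$. The logical role of this condition is therefore inverted in your write-up: it cannot be ``matched against'' a pre-imposed $d_0=w$; it must be used as the input that fixes the seed. Indeed, comparing lowest-order coefficients in $\overline{\mathcal L}\mathcal Q=\mathcal Q\Ql^{-1}$ gives $d_{-1}=q_0\big(\Ql^{-1}q_0\big)^{-1}$, so $\qe\qp_x\log q_0=pw$ is equivalent (up to an irrelevant constant) to the exponential formula for $d_{-1}$ above; with that seed the residue identities determine $d_0,d_1,\dots$ recursively and the \cite{carlet2004extended} argument then yields $\qe\mathcal Q_x\mathcal Q^{-1}=p\Ql+pw$, closing the proof. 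With your normalization the deferred verification simply fails: one finds $\qe\qp_x\log q_0=\frac{\qe\qp_x}{1-\Ql^{-1}}\log\big(\frac1p\frac{\qe\qp_x}{\Ql-1}w\big)$, whose dispersionless limit is $\log(w/p)$ rather than $pw$, so the operator you construct satisfies a relation of the form \eqref{BR-Lax2} with $w$ replaced by a different function, not the relation claimed in the lemma.
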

\begin{Def}
We define the following symmetries for the ILW hierarchy:
\begin{equation}
\label{def2-ILW}
\qe\diff{\mathcal K}{s_{2,n}} = -\big[\,\overline{\mathcal L}^n_-,\mathcal K\big],
\end{equation}
where $\overline{\mathcal L}$ is introduced in Lemma \ref{zh-1-9}.
\end{Def}
\begin{Ex}
The first flow defined in \eqref{def2-ILW} reads
\[
\qe\diff{w}{s_{2,1}} = (\Ql-1)\exp\bigg(p\frac{1-\Ql^{-1}}{\qe\qp_x}w\bigg).
\]
\end{Ex}

In a similar way as we prove Theorem \ref{temp15}, we can prove the following theorem.
\begin{Th}
The operator $ L$ defined in Lemma $\ref{BI}$ and the operator $\overline{\mathcal L}$ defined in \eqref{BR-Lax2} satisfy the following identity:
\begin{equation*}
\Ql^{1/2}\res\overline{\mathcal L}^k = \res L^k,\qquad k\geq 1,
\end{equation*}
where the functions $u(x)$ and $w(x)$ in the above identity are related by \eqref{Miura}.
\end{Th}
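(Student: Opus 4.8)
The plan is to follow the strategy of the proof of Theorem~\ref{temp15}, exploiting the fact that here \emph{both} operators satisfy pure-logarithm relations, namely $\frac1p\log L=K$ from Lemma~\ref{BI} and $\frac1p\log\overline{\mathcal L}=\mathcal K$ from \eqref{BR-Lax2}. First I would gauge-transform $\overline{\mathcal L}$ to a monic operator by the residue of its dressing operator; then take the adjoint and conjugate by $\Ql^{1/2}$, and recognise the outcome as the operator $L$ via the uniqueness part of Lemma~\ref{BI}. The desired residue identity will then drop out from the conjugation-invariance of residues.

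Concretely, let $\mathcal Q$ be the dressing operator of $\overline{\mathcal L}=\mathcal Q\Ql^{-1}\mathcal Q^{-1}$ from Lemma~\ref{zh-1-9} and let $g$ be its residue. Matching the lowest-order coefficient of $\overline{\mathcal L}$ gives the string equation $g=d_{-1}\Ql^{-1}g$, the analogue of \eqref{temp10}, where $d_{-1}$ is the leading coefficient of $\overline{\mathcal L}$. From the defining relation $\frac1p\log\overline{\mathcal L}=\mathcal K$ the lowest-order part yields $\frac1p\qe\qp_x d_{-1}=d_{-1}(1-\Ql^{-1})w$, so that $d_{-1}=\exp\big(p\frac{1-\Ql^{-1}}{\qe\qp_x}w\big)$, which under the Miura transformation \eqref{Miura} equals $\Ql^{-1/2}\mathrm e^{u}$. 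Setting $\widetilde{\mathcal L}=g^{-1}\overline{\mathcal L}g$ one gets a monic operator $\widetilde{\mathcal L}=\Ql^{-1}+\sum_{k\ge0}\widetilde d_k\Ql^k$, and conjugating the log relation gives $\frac1p\log\widetilde{\mathcal L}=g^{-1}\mathcal K g$. The heart of the computation is to simplify the right-hand side: the string equation gives $\log g=\frac{1}{\Ql^{1/2}-\Ql^{-1/2}}u$, whence $\frac{\Ql g}{g}=\Ql^{1/2}\mathrm e^{u}$ and, by the inverse of \eqref{Miura}, $\frac1p\qe\qp_x\log g=w$. These two facts make the $w$-term cancel and collapse the relation to the clean form
\[
\frac1p\log\widetilde{\mathcal L}=\big(\Ql^{1/2}\mathrm e^{u}\big)\Ql-\frac1p\qe\qp_x,
\]
which is the analogue of \eqref{temp11}.

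Taking the adjoint of this relation and then conjugating by $\Ql^{1/2}$ gives
\[
\frac1p\log\big(\Ql^{1/2}\widetilde{\mathcal L}^*\Ql^{-1/2}\big)=\mathrm e^{u}\Ql^{-1}+\frac1p\qe\qp_x=K.
\]
Since $\widetilde{\mathcal L}^*$ has leading term $\Ql$, the operator $\Ql^{1/2}\widetilde{\mathcal L}^*\Ql^{-1/2}$ is monic of the form \eqref{BG}; checking that its $\qe\to0$ limit matches the normalization \eqref{zh-2022-2} and invoking the uniqueness in Lemma~\ref{BI} identifies it with $L$, i.e., $L=\Ql^{1/2}\widetilde{\mathcal L}^*\Ql^{-1/2}$. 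Finally, residues are unchanged by conjugation by a function and by the adjoint, and $\res\big(\Ql^{1/2}D\Ql^{-1/2}\big)=\Ql^{1/2}\res D$; combining these yields $\res L^k=\Ql^{1/2}\res(\widetilde{\mathcal L}^*)^k=\Ql^{1/2}\res\widetilde{\mathcal L}^k=\Ql^{1/2}\res\overline{\mathcal L}^k$, which is the assertion.

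The main obstacle I anticipate is the gauge-transformation step, i.e., establishing the clean form displayed above: it is precisely where the Miura transformation \eqref{Miura} and the half-shift $\Ql^{1/2}$ conspire, through the cancellation of the $w$-term and the identity $\frac{\Ql g}{g}=\Ql^{1/2}\mathrm e^{u}$, to turn the $w$-dependence of $\mathcal K$ into the $\mathrm e^u$-dependence of $K$. A secondary point requiring care is the verification of the dispersionless limits needed to apply the uniqueness in Lemma~\ref{BI}, which is routine but must be carried out to rule out integration-constant ambiguities in $\log L$.
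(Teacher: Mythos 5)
Your proposal is correct and follows essentially the paper's own route: the paper proves this theorem ``in a similar way'' to Theorem~\ref{temp15}, namely by gauge-transforming the operator with leading term $(\cdot)\Ql^{-1}$ by the residue of its dressing operator, passing to the adjoint conjugated by $\Ql^{1/2}$, identifying the result with the other Lax operator, and then using the invariance of residues under these operations. Your only variation is cosmetic: you make the final identification at the operator level via $\frac1p\log\big(\Ql^{1/2}\widetilde{\mathcal L}^{*}\Ql^{-1/2}\big)=K$ together with the uniqueness (and dispersionless-limit check) of Lemma~\ref{BI}, whereas the template proof of Theorem~\ref{temp15} extracts coefficient recursion relations and compares them with the known ones.
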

\begin{Cor}
The flows $\diff{w}{s_{2,n}}$ defined in \eqref{def2-ILW} coincide with the flows $\diff{u}{t_{1,n}}$ of the LFV hierarchy \eqref{def-LFVH} after identifying $w(x)$ and $u(x)$ by \eqref{Miura}. In particular, we have
\[
\bigg[\diff{}{s_{1,n}},\diff{}{s_{2,m}}\bigg] = \bigg[\diff{}{s_{2,n}},\diff{}{s_{2,m}}\bigg] = 0,\qquad n,m\geq 1.
\]
\end{Cor}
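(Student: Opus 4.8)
The plan is to follow verbatim the strategy of the proof of Corollary \ref{temp16}, with the roles of the pairs $(L,\ol)$ and $(\mathcal L,\overline{\mathcal L})$ interchanged. That is, I would reduce both flows to residues of powers of the relevant Lax operators and then invoke the operator identity $\Ql^{1/2}\res\overline{\mathcal L}^k=\res L^k$ from the preceding theorem, transported through the Miura-type transformation \eqref{Miura}. The commutativity statement is then a formal consequence of the two identifications together with the already established mutual commutativity of the LFV flows.

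First I would express the symmetry flow $\diff{w}{s_{2,n}}$ at the residue level. By Lemma \ref{zh-1-9} we have $\frac1p\log\overline{\mathcal L}=\mathcal K$, so $\overline{\mathcal L}$ commutes with $\mathcal K=\Ql+w-\frac1p\qe\qp_x$; hence $-\big[\overline{\mathcal L}^n_-,\mathcal K\big]=\big[\overline{\mathcal L}^n_+,\mathcal K\big]$ and the right-hand side of \eqref{def2-ILW} is a multiplication operator, since one bracket has only non-positive and the other only non-negative powers of $\Ql$. Because $\diff{\mathcal K}{s_{2,n}}$ is multiplication by $\diff{w}{s_{2,n}}$, extracting the coefficient of $\Ql^0$ and using $\big[\overline{\mathcal L}^n,\mathcal K\big]=0$ exactly as in the derivation of $\qe\diff{w}{s_{1,n}}=\frac1p\qe\qp_x\res\mathcal L^n$ should yield
\[
\qe\diff{w}{s_{2,n}}=\frac1p\qe\qp_x\res\overline{\mathcal L}^n,\qquad n\geq 1,
\]
the only surviving contributions to the $\Ql^0$ coefficient coming from the brackets with $\Ql$ and with $-\frac1p\qe\qp_x$.

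Next I would push this through the transformation \eqref{Miura}, $u=p\frac{\Ql^{1/2}-\Ql^{-1/2}}{\qe\qp_x}w$, which gives
\[
\qe\diff{u}{s_{2,n}}=p\frac{\Ql^{1/2}-\Ql^{-1/2}}{\qe\qp_x}\,\qe\diff{w}{s_{2,n}}=\big(\Ql^{1/2}-\Ql^{-1/2}\big)\res\overline{\mathcal L}^n.
\]
Using $\Ql^{1/2}\res\overline{\mathcal L}^n=\res L^n$ from the preceding theorem together with $\Ql^{1/2}-\Ql^{-1/2}=\big(1-\Ql^{-1}\big)\Ql^{1/2}$, the right-hand side becomes $\big(1-\Ql^{-1}\big)\res L^n$, which is precisely the residue form $\qe\diff{u}{t_{1,n}}=\big(1-\Ql^{-1}\big)\res L^n$ of the first family of LFV flows \eqref{def-LFVH} established earlier. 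Therefore $\diff{w}{s_{2,n}}$ and $\diff{u}{t_{1,n}}$ coincide under \eqref{Miura}, proving the first assertion.

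Finally, the commutation relations follow with no further computation. By Corollary \ref{temp16} the flow $\diff{}{s_{1,n}}$ is identified with $\diff{}{t_{2,n}}$, and by the part just proved $\diff{}{s_{2,n}}$ is identified with $\diff{}{t_{1,n}}$, both under the same transformation \eqref{Miura}. Since the LFV flows mutually commute, we obtain
\[
\bigg[\diff{}{s_{1,n}},\diff{}{s_{2,m}}\bigg]=\bigg[\diff{}{t_{2,n}},\diff{}{t_{1,m}}\bigg]=0,\qquad \bigg[\diff{}{s_{2,n}},\diff{}{s_{2,m}}\bigg]=\bigg[\diff{}{t_{1,n}},\diff{}{t_{1,m}}\bigg]=0.
\]
The only step requiring genuine care is the residue computation in the second paragraph; everything after it is substitution, so I expect no real obstacle beyond checking that the $\Ql^0$-extraction for $\overline{\mathcal L}$ produces the same normalization $\frac1p\qe\qp_x$ as in the $\mathcal L$ case.
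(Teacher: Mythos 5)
Your proposal is correct and is essentially the proof the paper intends: the paper leaves this corollary unproved as the direct analogue of Corollary \ref{temp16}, and your argument — reducing the symmetry flow to $\qe\diff{w}{s_{2,n}}=\frac1p\qe\qp_x\res\overline{\mathcal L}^n$, pushing it through \eqref{Miura} to get $\big(\Ql^{1/2}-\Ql^{-1/2}\big)\res\overline{\mathcal L}^n=\big(1-\Ql^{-1}\big)\res L^n$, and then deducing commutativity from the identifications with the mutually commuting LFV flows — is exactly that analogue. One harmless imprecision: when extracting the $\Ql^0$ coefficient of $\big[\overline{\mathcal L}^n_+,\mathcal K\big]$ only the bracket with $-\frac1p\qe\qp_x$ contributes (the bracket with $\Ql$ contributes only if one instead works with $-\big[\overline{\mathcal L}^n_-,\mathcal K\big]$, giving the equivalent form $(\Ql-1)\res\big(\overline{\mathcal L}^n\Ql\big)$), but this does not affect your conclusion.
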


\section{Concluding remarks}\label{sec5}
In this paper, we consider the limiting procedure from the special cubic Hodge integrals to the linear Hodge integrals in view of the theory of integrable hierarchies. By taking a certain limit of the FV hierarchy, we obtain an integrable hierarchy which, together with the ILW hierarchy, forms a reduction of the 2D Toda hierarchy. We call the resulting hierarchy the LFV hierarchy, which is a Hamiltonian tau-symmetric integrable hierarchy with hydrodynamic limit.

This limiting procedure is quite natural in geometric setting, for example in the Gromov--Witten theory or the Hurwitz theory. Our result can be viewed as an integrable hierarchy theoretical interpretation of the Bouchard--Mari\~{n}o conjecture \cite{bouchard2007hurwitz} which is proved in \cite{mulase2010polynomial}. In \cite{bouchard2007hurwitz}, it is conjectured that the generating function of linear Hodge integrals can be computed in the scheme of Eynard and Orantin topological recursion associated with the spectral curve
\[C = \big\{x = y{\rm e}^{-y}\big\},\]
which is related to the symbol of the constraint \eqref{BN}. Their conjecture is based on a limiting procedure of the Mari\~{n}o--Vafa formula which relates the open amplitude of the A-model topological string on $\mathbb C^3$ with a framed brane on one leg of the toric diagram to the cubic Hodge integrals \cite{liu2003proof,liu2003mari,marino2001}. The Mari\~{n}o--Vafa formula can be used to derive a special case of Hodge-FVH correspondence \cite{takasaki2019cubic}, and the results of the present paper explain the above limiting procedure in view of the theory of integrable hierarchies. We thank the anonymous referee for pointing out this relation to us.

\subsection*{Acknowledgements}

This work is supported by NSFC no.~12171268, no.~11725104 and no.~11771238. We thank the anonymous referees for helpful comments and suggestions to improve the presentation of the paper.

\pdfbookmark[1]{References}{ref}
\LastPageEnding

\end{document}